\DeclareMathOperator*{\argmin}{arg\,min}
\newcommand{\Ac}{\mathcal{A}}
\newcommand{\Cc}{\mathcal{C}}
\newcommand{\Ec}{\mathcal{E}}
\newcommand{\Hc}{\mathcal{H}}
\newcommand{\Pc}{\mathcal{P}}
\newcommand{\Sc}{\mathcal{S}}
\newcommand{\Xc}{\mathcal{X}}
\newcommand{\Yc}{\mathcal{Y}}
\def\a{\alpha}
\def\b{\beta}
\def\d{\delta}
\def\e{\epsilon}
\def\l{\lambda}
\DeclareMathOperator\E{E}
\let\P\relax
\DeclareMathOperator\P{P}
\newcommand{\Bern}{\mathrm{Bern}}
\newcommand\ie{i.e.,\xspace}
\def\textiid{i.i.d.\@\xspace}
\newcommand\iid{\ifmmode\text{ i.i.d. } \else \textiid \fi}
\newcommand{\ind}{\mathbbmss{1}}
\newcommand{\Mc}{\mathcal{M}}
\newcommand{\La}{\Lambda}
\newtheorem{remark}{Remark}
\newtheorem{theorem}{Theorem}
\newtheorem{lemma}{Lemma}
\newtheorem{corollary}{Corollary}
\begin{document}

\title{Lossy  compression of discrete sources via Viterbi algorithm}

\author{Shirin Jalali, Andrea Montanari and Tsachy Weissman}
\maketitle

\newcommand{\p}{\mathds{P}}
\newcommand{\mb}{\mathbf{m}}
\newcommand{\bb}{\mathbf{b}}

\begin{abstract}

We present a new lossy compressor for discrete-valued sources. For coding a sequence $x^n$,  the encoder starts by assigning a certain cost to each possible  reconstruction sequence. It then finds the one that minimizes this cost and describes it losslessly to the decoder via a universal lossless compressor. The cost of each sequence is a linear combination of its distance from the sequence $x^n$ and a linear function of its $k^{\rm  th}$ order empirical distribution. The structure of the cost function allows the encoder to employ the  Viterbi algorithm to recover the minimizer of the cost. We identify a choice of the coefficients comprising the linear function of the empirical distribution used in the cost function which ensures that the algorithm universally achieves the optimum rate-distortion performance of any stationary ergodic source in the limit of large $n$, provided  that $k$ diverges  as $o(\log n)$. Iterative techniques for approximating the coefficients, which alleviate the computational burden of finding the optimal coefficients, are proposed and studied.   
\end{abstract}

\section{Introduction}
Consider the problem of universal lossy compression of stationary ergodic sources described as follows. Let $\mathbf{X}=\{X_i;\forall\; i\in\mathds{N}^{+}\}$ be a stochastic process and  let $\Xc$ denote its alphabet which is assumed  discrete and finite throughout this paper. Consider a family of source codes $\{\Cc_n\}_{n\geq 1}$. Each code $\Cc_n$ in this family consists of  an encoder $f_n$ and a decoder $g_n$ such that
\begin{align}
f_n:\Xc^n\to \{0,1\}^{*},
\end{align}
and
\begin{align}
g_n:\{0,1\}^{*}\to\hat{\Xc}^n,
\end{align}
where $\hat{\Xc}$ denotes the reconstruction alphabet which also is assumed to be finite and in most cases is equal to $\Xc$. $\{0,1\}^*$ denotes the set of all finite length binary sequences. The encoder $f_n$ maps each source block $X^n$ to a binary sequence of finite length, and the decoder $g_n$ maps the coded bits back to the signal space as $\hat{X}^n=g_n(f_n(X^n))$. Let  $l_n(f_n(X^n))$ denote the length of the binary sequence assigned to sequence $X^n$ by the encoder $f_n$. The performance of each code in this family is measured by the expected rate and the expected average distortion it induces. For a given source $\mathbf{X}$ and coding scheme $\Cc_n$, the expected rate $R_n$, and expected average distortion $D_n$, of  $\Cc_n$ in coding the process $\mathbf{X}$ are defined as follows:
\begin{align}
R_n =\E[\frac{1}{n}l_n(f_n(X^n))],
\end{align}
and
\begin{align}
D_n =\E[d_n(X^n,\hat{X^n})] \triangleq \E\left[\frac{1}{n}\sum\limits_{i=1}^n d(X_i,\hat{X}_i)\right],
\end{align}
where $\hat{X}^n=g_n(f_n(X^n))$, and $d:\Xc\times\hat{\Xc}\rightarrow{\mathds{R}}^+$ is a per-letter distortion measure.

For a given process and  any rate $R\geq0$, the minimum achievable distortion (cf.~\cite{cover} for exact definition of achievability) is characterized as
\cite{Shannon60}, \cite{Gallager}, \cite{book:Berger}
\begin{equation} \label{eq: rate-distortion function}
D(R,\mathbf{X})=\lim\limits_{n\rightarrow\infty}\min\limits_{p(\hat{X}^n|X^n):I(X^n;\hat{X}^n)\leq
R}\E[d_n(X^n,\hat{X}^n)].
\end{equation}
Similarly, for any distortion $D>0$, define $R(D,\mathbf{X})$ to denote the minimum required rate for achieving distortion $D$, i.e.,
\[
R(D,\mathbf{X}) = \min_{D(r,\mathbf{X})\leq D} r.
\]

Universal lossy compression codes are usually defined in the literature in one of the following modes \cite{YangZ:97}:
\begin{enumerate}
\item[I.] Fixed-rate: A family of lossy compression codes $\{\Cc_n\}$ is called fixed-rate universal, if for every stationary ergodic process $\mathbf{X}$, $R_n\leq R$, $\forall n\geq 1$, and
    \[
    \limsup_n D_n = D(R,\mathbf{X}).
    \]

\item[II.]  Fixed-distortion: A family of lossy compression codes $\{\Cc_n\}$ is called fixed-distortion universal, if for every stationary ergodic process $\mathbf{X}$, $D_n \leq D$, $\forall n\geq 1$, and
    \[
    \limsup_n R_n = R(D,\mathbf{X}).
    \]

\item[III.]  Fixed-slope: A family of lossy compression codes $\{\Cc_n\}$ is called fixed-slope universal, if there exists $\alpha>0$, such that for every stationary ergodic process $\mathbf{X}$
    \[
    \limsup_n [R_n+\alpha D_n] = \min\limits_{D\geq0}[R(D,\mathbf{X})+\alpha D].
    \]

\end{enumerate}

Existence of universal lossy compression codes for all these paradigms has already been established in the literature a long time ago \cite{Sakrison:70,Ziv:72,NeuhoffG:75,NeuhoffS:78,Ziv:80,GarciaN:82}.
The remaining challenging step is to design universal lossy compression algorithms that are implementable and appealing from  a practical viewpoint.

\subsection{Related prior work}

Unlike lossless compression, where there exists a number of well-known universal algorithms which are also attractive from a practical perspective (cf. Lempel-Ziv algorithm \cite{LZ} or arithmetic coding algorithm \cite{arith_coding}), in lossy compression, despite all the progress in recent years, no such algorithm is yet known. In this section, we briefly review some of the related literature on universal lossy compression with the main emphasis on the progress towards  the design of practically appealing algorithms.

There have been different approaches towards designing universal lossy compression algorithms. Among them the one with longest history is that of  tuning the well-known universal lossless compression algorithms to work for the lossy case as well. For instance, Cheung and Wei \cite{CheungW:90} extended the move-to-front transform  to the case where the reconstruction is not required to perfectly match the original sequence. One basic tool used in LZ-type compression algorithms, is the idea of string-matching, and hence there have been many attempts to find  optimal approximate string-matching. Morita and Kobayashi \cite{MoritaK:89} proposed a lossy version of LZW algorithm, and Steinberg and Gutman \cite{SteinbergG:93} suggested a fixed-database lossy compression algorithms based on string-matching. Although the extensions could all be implemented efficiently, they were later proved to be sub-optimal by Yang and Kieffer \cite{YangK:98}, even for memoryless sources. Another related example, is the work by Luczak and Szpankowski which proposes another suboptimal compression algorithm which again uses the ideas of approximate pattern matching \cite{LuczakS:97}. For some other related work see \cite{ZamirR:01} \cite{AtallahG:99}\cite{DemboK:99}.


Another well-studied approach to lossy compression is Trellis coded quantization \cite{MarcellinF:90} and more generally vector quantization (c.f.~\cite{BergerG:98}, \cite{book:GershoGray:92} and the references therein). Codes of this type are usually designed for a given distributions  encountered in a specific application.  For example, such codes are used in image compression (JPEG) or video compression (MPEG). Nevertheless, there have been attempts at extending such codes to more general settings. For instance Kasner,  Marcellin, and Hunt proposed  universal Trellis coded quantization which is used in the JPEG2000 standard \cite{KasnerM:99}.

There has been a lot of  progress in recent years in designing  \emph{non-universal} lossy compression algorithms of discrete memoryless sources.  Some examples of the recent work in this area are as follows. Wainwright and Maneva \cite{WainrightM:05} proposed a lossy compression algorithm based on message-passing ideas. The effectiveness of the scheme was shown by simulations.  Gupta and Verd\'u proposed an algorithm based on non-linear sparse-graph codes \cite{GuptaV:09}. Another algorithm with near linear complexity is suggested by Gupta, Verd\'u and Weissman in \cite{GuptaV:08}. The algorithm is  based on a `divide and conquer' strategy. It breaks the source sequence into sub-blocks and codes the subsequences  separately using a random codebook. Finally, the capacity-achieving polar codes proposed by Arikan \cite{Arikan:09_arxiv} for channel coding are shown to  be optimal for lossy compression of binary-symmetric memoryless sources in \cite{BabuR:09_arxiv}.

 The idea of fixed-slope universal lossy compression was first suggested by Yang, Zhang and Berger in \cite{YangZ:97}. They proposed a generic fixed-slope universal algorithm which leads to  specific coding algorithms based on different  universal lossless compression algorithms. Although the constructed algorithms are all universal, they involve computationally demanding minimizations, and hence are impractical. In \cite{YangZ:97}, the authors considered lowering the search complexity by choosing appropriate lossless codes which allow to replace the required exhaustive search by a low-complexity sequential search scheme that approximates the solution of the required minimization. However, these schemes only find an approximation of the optimal solution.

In a recent work \cite{JalaliW:09_arxiv}, a new implementable algorithm for fixed-slope lossy compression of discrete sources was proposed. Although the algorithm involves a minimization which resembles a specific realization of the generic cost proposed in \cite{YangZ:97}, it is  somewhat different. The reason is that the cost used in \cite{JalaliW:09_arxiv} cannot be derived directly from a lossless compression algorithm. The advantage of the new cost function is that it lends itself to rather naturally  Gibbs simulated annealing  in that the computational effort involved in each iteration is modest. It was shown that using a universal lossless compressor to describe the reconstruction sequence found by the annealing process to the decoder results in a scheme which is universal in the limit of many iterations and large block length. The drawback of the proposed scheme is that although its computational complexity per iteration is independent of the block length $n$ and linear in a parameter $k_n=o(\log n)$, there is no useful bound on the number of iterations required for convergence.

In this paper, motivated by the algorithm proposed in \cite{JalaliW:09_arxiv}, we propose another approach to fixed-slope lossy compression of discrete sources. We start by making a linear approximation of the cost used in \cite{JalaliW:09_arxiv}. The cost assigned to each possible reconstruction sequence consists of a linear   combination of two terms:  a linear function of its empirical distribution plus its distance to  (distortion from) the source sequence. We show that there exists proper coefficients such that minimizing the linearized cost function results in the same performance as  would minimizing the original cost. The advantage of the modified cost  is that its minimizer can be found simply using the Viterbi algorithm.

\subsection{Organization of this paper}

The organization of the paper is as follows. In Section \ref{sec: conditional}, the count matrix of a sequence and its empirical conditional entropy is introduced and some of their properties are  studied. Section \ref{sec: exhaustive} reviews the fixed-slope universal lossy compression algorithm used in \cite{JalaliW:09_arxiv}. Section \ref{sec: linearized cost} describes a new coding scheme for fixed-slope lossy compression derived by replacing part of the cost used in the mentioned exhaustive-search algorithm by a linear function. We prove that using appropriate coefficients for the linear function, the performance of the two algorithms remains the same. In Section \ref{sec: how to choose}, a method for approximating these optimal coefficients is presented. This method,  along with the result of the previous section, gives rise to  a fixed-slope universal lossy compression algorithm that achieves the rate-distortion performance for any discrete stationary ergodic source. The advantage of this modified cost is discussed in Section \ref{sec: Viterbi coding} where we show that the minimizer of the new cost can be found using the Viterbi algorithm. The method introduced for approximating the coefficients is computationally demanding, and hence is impractical. Therefore, in Section \ref{sec: viterbi iterative}, we discuss a low-complexity iterative detour for approximating the coefficients. Section \ref{sec: simulations Viterbi} presents some simulations results and, finally, Section \ref{sec: conclusion} concludes the paper with a discussion of some future directions.

\section{Conditional empirical entropy and its properties}\label{sec: conditional}
For any $y^n\in\Yc^n$, let the $|\Yc|\times|\Yc|^{k}$ matrix $\mathbf{m}(y^n)$ denote its $(k+1)^{\rm th}$ order empirical distribution\footnote{For any set $\Ac$, $|\Ac|$ denotes its size.}. For $\bb=(b_1,\ldots,b_k)\in\Yc^k$, and $\b\in\Yc$, the element in the $\b^{\rm th}$ row and the $\bb^{\rm th}$ column of the matrix $\mb$, $m_{\b,\bb}$, is defined as
\begin{align}\label{eq: empirical count matrix}
m_{\b,\bb}(y^n) \triangleq \frac{1}{n} \left| \left\{1 \leq i \leq n: y_{i-k}^{i-1} = \bb, y_i=\b]    \right\}\right|,
\end{align}
where here and throughout the paper we assume a cyclic convention whereby $y_{i}=y_{i+n}$ for $i\leq 0$. 

Based on the distribution induced by $\mb(y^n)$, define the $k^{\rm th}$ order conditional empirical entropy of $y^n$, $H_k(y^n)$, as
\begin{equation}\label{eq: emp cond entropy}
   H_k (y^n) \triangleq H(Z_{k+1}|Z^{k}),
\end{equation}
where $Z^{k+1}$ is assumed to be distributed according to $\mb$, i.e., 
\begin{equation}\label{eq: empirical distribution}
   \P \left(Z^{k+1} = [b_1,\ldots,b_k,\b]=[\bb,\b]\right) = m_{\b,\bb}(y^n).
\end{equation}
For a vector
$\mathbf{v}= (v_1, \ldots , v_\ell)^T$ with non-negative
components, we let $\mathcal{H}(\mathbf{v})$ denote the entropy
of the random variable whose probability mass function (pmf) is
proportional to $\mathbf{v}$. Formally,
\begin{equation}\label{eq: entropy_vec}
\mathcal{H} (\mathbf{v}) = \left\{ \begin{array}{cc}
                           \sum\limits_{i=1}^\ell \frac{v_i}{\| \mathbf{v}
\|_1}  \log \frac{\| \mathbf{v} \|_1}{v_i} &  \mbox{ if }  \mathbf{v}
\neq (0, \ldots , 0)^T \\
                           0 & \mbox{ if } \mathbf{v}  = (0, \ldots , 0)^T,
                         \end{array}
\right.
\end{equation}
where $0\log(0) = 0$ by convention.
With this notation, the conditional empirical entropy $H_k(y^n)$ defined in \eqref{eq: emp cond entropy} is readily seen to be expressible  in terms of $\mb(y^n)$ as
\begin{equation}\label{eq: alternative representation of Hk}
H_k (y^n) \triangleq H(\mb(y^n))  \triangleq \sum_{\bb} \mathcal{H} \left(
\mb_{\cdot,\bb} \right)\sum_{\b\in\Yc}m_{\b,\bb},
\end{equation}
where  $\mb_{\cdot,\bb}$ denotes the column of $\mb$ indexed by $\bb$. \\
\begin{remark} Note that $H_k(\cdot)$ has a discrete domain, while the domain of $H(\cdot)$ is continuous and consists of all $|\Yc|\times|\Yc|^k$ matrices with positive real entries adding up to one.  In other words,
\begin{align}
H_k:\Yc^{n}\to [0,1],
\end{align}
but
\begin{align}
H:[0,1]^{|\Yc|}\times[0,1]^{|\Yc|^k}\to [0,1].
\end{align}
\end{remark}

%
%
%

Conditional empirical entropy of sequences, $H_k(\cdot)$, plays key role in our results. Hence, in the following two subsections, we focus on this function, and study some of its properties.

\subsection{Concavity}
We prove that like the standard entropy function, conditional empirical entropy is also a concave function. By
definition
\begin{align}
H(\mb)=\sum\limits_{\bb\in\Yc^k}(\sum_{\b\in\Yc}m_{\b,\bb})\Hc(\mb_{\cdot,\bb}),
\end{align}
where $\Hc(\cdot)$ is defined in \eqref{eq: entropy_vec}. We need to show that for any $\theta\in[0,1]$,
and matrices $\mb^{(1)}$ and $\mb^{(2)}$  with non-negative components adding up to one,
\begin{align}
\theta H(\mb^{(1)})+\bar{\theta}H(\mb^{(2)})\leq
H(\theta\mb^{(1)}+\bar{\theta}\mb^{(2)}),
\end{align}
where $\bar{\theta}=1-\theta$. From the concavity of entropy function $\Hc$, it follows that
\begin{align}
&\theta(\sum_{\b\in\Yc} m^{(1)}_{\b,\bb})\Hc(\mb^{(1)}_{\cdot,\bb})+\bar{\theta}(\sum_{\b\in\Yc} m^{(2)}_{\b,\bb})\Hc(\mb^{(2)}_{\cdot,\bb})\nonumber\\
&=(\theta(\sum_{\b\in\Yc} m^{(1)}_{\b,\bb})+\bar{\theta}
(\sum_{\b\in\Yc} m^{(2)}_{\b,\bb}))\sum\limits_{i\in\{1,2\}}\frac{\theta_i(\sum_{\b\in\Yc} m^{(i)}_{\b,\bb})}{(\theta(\sum_{\b\in\Yc} m^{(1)}_{\b,\bb})+\bar{\theta}
(\sum_{\b\in\Yc} m^{(2)}_{\b,\bb}))}\Hc(\mb_{\cdot,\bb}^{(i)})\nonumber\\
&\leq (\theta(\sum_{\b\in\Yc} m^{(1)}_{\b,\bb})+\bar{\theta}
(\sum_{\b\in\Yc} m^{(2)}_{\b,\bb}))\Hc(\theta \mb_{\cdot,\bb}^{(1)} + \bar{\theta}\mb_{\cdot,\bb}^{(2)} ),\label{eq: concavity}
\end{align}
where $\theta_1\triangleq1-\theta_2\triangleq\theta$. Summing up both sides of \eqref{eq: concavity} over all $\bb\in\Yc^{k}$ yields the desired result.

\subsection{Stationarity  condition}\label{subsec:stationary}
Let  $p(y^{k+1})$ be a given pmf defined on $\Yc^{k+1}$. Under what condition(s) does there exist a a stationary process with its $(k+1)^{\rm th}$ order distribution equal to $p$?
\begin{lemma}\label{lemma: stationary}
The necessary and sufficient condition for $\{p(y^{k+1})\}_{y^{k+1}\in\Yc^{k+1}}$ to represent the $(k+1)^{\rm th}$ order marginal distribution of a stationary process is 
\begin{align}
\sum\limits_{\b\in\Yc}p(\b, y^k)=\sum\limits_{\b\in\Yc}p(y^k,\b),\; \forall\;y^k \in \mathcal{Y}^k. \label{eq: stat cond}
\end{align}
\end{lemma}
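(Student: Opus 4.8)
The plan is to handle necessity by a one-line marginalization argument and sufficiency by an explicit construction: a $k$-th order Markov chain, i.e.\ a first-order Markov chain on the state space $\Yc^k$ whose transition kernel permits only ``shift'' moves. For necessity, suppose $\{Y_i\}$ is stationary with $(k+1)^{\rm th}$ order marginal $p$. For any $y^k\in\Yc^k$, marginalizing $p$ over its last coordinate gives $\sum_{\b\in\Yc}p(y^k,\b)=\P(Y_1^k=y^k)$, while marginalizing over its first coordinate gives $\sum_{\b\in\Yc}p(\b,y^k)=\P(Y_2^{k+1}=y^k)$, and by stationarity $\P(Y_2^{k+1}=y^k)=\P(Y_1^k=y^k)$; equating these yields \eqref{eq: stat cond}.

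For sufficiency, let $p$ satisfy \eqref{eq: stat cond} and set $q(y^k):=\sum_{\b\in\Yc}p(y^k,\b)=\sum_{\b\in\Yc}p(\b,y^k)$, the two sums being equal precisely by hypothesis. On $\Yc^k$ define the transition kernel $K$ that moves a state $(a_1,\ldots,a_k)$ to $(a_2,\ldots,a_k,\b)$ with probability $p(a_1,\ldots,a_k,\b)/q(a_1,\ldots,a_k)$ when $q(a_1,\ldots,a_k)>0$, and arbitrarily (say, uniformly over the $|\Yc|$ admissible successors) otherwise; since $\sum_{\b}p(a_1,\ldots,a_k,\b)=q(a_1,\ldots,a_k)$, this is a bona fide stochastic kernel. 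The key point is that $q$ is $K$-invariant: the only predecessors of a state $(a_2,\ldots,a_k,\b)$ are the states $(a_1,a_2,\ldots,a_k)$ with $a_1\in\Yc$ (a predecessor with $q=0$ carries no mass, and there $p(\cdot,\b)=0$ as well), so the $q$-mass entering $(a_2,\ldots,a_k,\b)$ equals $\sum_{a_1}q(a_1,\ldots,a_k)\,p(a_1,\ldots,a_k,\b)/q(a_1,\ldots,a_k)=\sum_{a_1}p(a_1,a_2,\ldots,a_k,\b)$, and this last sum is exactly $\sum_{\gamma\in\Yc}p(\gamma,z^k)$ for $z^k=(a_2,\ldots,a_k,\b)$, which by \eqref{eq: stat cond} equals $\sum_{\gamma\in\Yc}p(z^k,\gamma)=q(a_2,\ldots,a_k,\b)$.

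To finish, let $\{S_i\}_{i\ge1}$ be the Markov chain with initial law $q$ and kernel $K$; since $q$ is $K$-invariant this chain is stationary. Write $S_i=(Y_i,\ldots,Y_{i+k-1})$: overlapping windows are automatically consistent because only shift transitions have positive probability, so $\{Y_i\}_{i\ge1}$ is well defined, and it is stationary because $\{S_i\}$ is and each $Y$-block is a shift-equivariant deterministic function of an $S$-block. Its $(k+1)^{\rm th}$ order marginal is $\P(Y_1^{k+1}=a^{k+1})=\P\big(S_1=(a_1,\ldots,a_k),\,S_2=(a_2,\ldots,a_{k+1})\big)=q(a_1,\ldots,a_k)\,p(a^{k+1})/q(a_1,\ldots,a_k)=p(a^{k+1})$ when $q(a_1,\ldots,a_k)>0$, while both sides vanish otherwise; hence this marginal is $p$. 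The only step that is not pure bookkeeping is the recognition that \eqref{eq: stat cond} is precisely the conservation-of-flow identity needed to make $q$ invariant under the shift-kernel; validity of the kernel, consistency of the windows, and the marginal computation are all routine. Should a two-sided stationary process be preferred, it follows from this one-sided one via Kolmogorov's extension theorem.
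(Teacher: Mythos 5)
Your argument is correct and takes essentially the same route as the paper: necessity by marginalizing $p$ in two ways and invoking stationarity, sufficiency by building a $k$-th order Markov chain with transition probabilities $p(y^{k+1})/p(y^k)$ and checking that $p$ (equivalently $q$ on $\Yc^k$) is its stationary law. You merely make explicit the verification the paper dismisses as ``easy to check,'' namely the invariance of $q$ under the shift kernel, the treatment of zero-mass states, and the consistency of the overlapping windows.
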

\begin{proof}
\begin{itemize}
\item[i.] Necessity: The necessity of \eqref{eq: stat cond}
    is just a direct result of the stationarity of the process. If $p(y^{k+1})$ is to
    represent the $(k+1)^{\rm th}$ order marginal
    distribution of a stationary process $\mathbf{Y}=\{Y_i\}$, then it should be
    consistent with the $k^{\rm th}$ order marginal
    distribution. Hence,  \eqref{eq: stat cond} should hold.
    
\item[ii.] Sufficiency: In order to prove the sufficiency, we
    assume that \eqref{eq: stat cond} holds, and build a
    stationary process with  $(k+1)^{\rm th}$ order
    marginal distribution equal to $p(y^{k+1})$. Let $\mathbf{Y}=\{Y_i\}_i$ be a
    Markov chain of order $k$ whose transition probabilities are defined as
\begin{align}
\P(Y_{k+1}=y_{k+1}|Y^k=y^k) \triangleq q(y_{k+1}|y^k) \triangleq \frac{p(y^{k+1})}{p(y^k)},
\end{align}
where \[ p(y^k)\triangleq \sum\limits_{\b\in\Yc}p(\b, y^k)=\sum\limits_{\b\in\Yc}p(y^k,\b).\] Now, given \eqref{eq: stat cond}, it is easy to check that $p(y^{k+1})$ is the $(k+1)^{\rm  th}$ order stationary distribution of the defined Markov chain. Therefore, $\mathbf{Y}$ is a stationary process with the desired marginal distribution.


\end{itemize}

\end{proof}

Throughout the paper, we refer to the condition stated in \eqref{eq: stat cond} as the \emph{stationarity condition}.

\begin{corollary}
For any $|\Yc|\times|\Yc|^{k}$ matrix $\mb$ corresponding to the $(k+1)^{\rm th}$ order empirical distribution of some $y^n\in\Yc^n$, there exists a stationary process whose marginal distribution coincides with $\mb$.
\end{corollary}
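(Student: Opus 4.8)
\emph{Proof proposal.} The plan is to obtain the corollary as an immediate consequence of Lemma~\ref{lemma: stationary}. Let $p$ be the pmf on $\Yc^{k+1}$ induced by the count matrix $\mb(y^n)$ through \eqref{eq: empirical distribution}, so that $p(\bb,\b)=m_{\b,\bb}(y^n)$ for every $\bb\in\Yc^k$, $\b\in\Yc$. By the sufficiency direction of Lemma~\ref{lemma: stationary}, it is enough to verify that $p$ obeys the stationarity condition \eqref{eq: stat cond}; the stationary process that the lemma then produces has $(k+1)^{\rm th}$ order marginal equal to $p$, i.e.\ equal to $\mb(y^n)$, which is precisely the assertion.

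So everything comes down to checking \eqref{eq: stat cond} for $p=\mb(y^n)$. Fix $y^k\in\Yc^k$. First I would read off, from \eqref{eq: empirical distribution} and \eqref{eq: empirical count matrix}, that $\sum_{\b\in\Yc}p(y^k,\b)=\sum_{\b\in\Yc}m_{\b,y^k}(y^n)$ is $1/n$ times the number of indices $i\in\{1,\ldots,n\}$ at which the length-$k$ window $y_{i-k}^{i-1}$ equals $y^k$ (summing over $\b$ simply removes the constraint on $y_i$). Next, $p(\b,y^k)$ is the entry of $\mb(y^n)$ in the row indexed by the last symbol of $y^k$ and the column indexed by $\b$ followed by the first $k-1$ symbols of $y^k$, so summing $p(\b,y^k)$ over $\b\in\Yc$ removes the constraint on that leading symbol and yields $1/n$ times the number of $i\in\{1,\ldots,n\}$ at which the window $y_{i-k+1}^{i}$ equals $y^k$. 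The two counts differ only in whether the matching window ends at position $i-1$ or at position $i$, so a shift of the summation index by one turns one into the other; this shift is harmless because, by the cyclic convention $y_i=y_{i+n}$, the index set $\{1,\ldots,n\}$ is closed under it modulo $n$. Hence both sides of \eqref{eq: stat cond} equal the same quantity — the normalized number of cyclic occurrences of $y^k$ in $y^n$ — so the stationarity condition holds, and Lemma~\ref{lemma: stationary} delivers the desired stationary process.

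I do not expect any real obstacle: the only place needing a little care is the index bookkeeping in the last step, i.e.\ confirming that prepending or appending a symbol to a window is a bijection that survives the wrap-around from $y_n$ to $y_1$, which is exactly the purpose of the cyclic convention built into \eqref{eq: empirical count matrix}. In effect this corollary is just Lemma~\ref{lemma: stationary} specialized to empirical $(k+1)^{\rm th}$ order distributions, and could as well be stated as a remark following that lemma.
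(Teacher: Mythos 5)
Your proof is correct and follows essentially the same route as the paper: invoke the sufficiency half of Lemma~\ref{lemma: stationary} and verify that the count matrix $\mb(y^n)$ satisfies the stationarity condition~\eqref{eq: stat cond}, which holds because both sides count (cyclic) occurrences of the window $y^k$ up to a unit index shift absorbed by the cyclic convention. The paper states this more tersely (and in fact writes the common value with a $1/(n-k)$ normalization, which is inconsistent with the $1/n$ normalization in~\eqref{eq: empirical count matrix} under the cyclic convention; your $1/n$ is the consistent choice), but the argument is the same.
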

\begin{proof}
From Lemma \ref{lemma: stationary}, we only need to show that \eqref{eq: stat cond} holds, i.e.,
\begin{align}
\sum\limits_{\b\in\Yc}m_{\b,\bb}=\sum\limits_{\b\in\Yc}m_{b_k,[\b,b_1\ldots,b_{k-1}]}, \forall \;\bb\in\Yc^k, \label{eq: m stat}
\end{align}
which obviously holds  because both sides of \eqref{eq: m stat} are equal to $|\{i:y_{i+1}^{i+k}=\bb\}|/(n-k)$.
\end{proof}


\section{Exhaustive search algorithm}\label{sec: exhaustive}
Consider the following lossy source coding algorithm. Given $\a>0$, for encoding sequence $x^n\in\Xc^n$, find
\begin{align}
\hat{x}^n=\argmin\limits_{y^n\in\hat{\Xc}^n}[H_k(y^n)+\a d_n(x^n,y^n)],\label{eq: exhaustive_search}
\end{align}
and describe $\hat{x}^n$ using the Lempel-Ziv coding algorithm. As proved before \cite{YangZ:97}, \cite{JalaliW:09_arxiv}, the described algorithm is a universal lossy  compression algorithm. That is, for any stationary ergodic source $\mathbf{X}$,
\begin{align}
\frac{1}{n}\ell_{\footnotesize{\rm LZ}}(\hat{X}^n)+\alpha d_n(X^n,\hat{X}^n)\to \min[R(D,\mathbf{X})+\a D],\;\;{\rm a.s.},
\end{align}
where  $X^n$ is  generated by the source $\mathbf{X}$, and $\hat{X}^n$ denotes the minimizer of \eqref{eq: exhaustive_search} for the input $X^n$. Here $\ell_{\footnotesize{\rm LZ}}$ denotes the length of the codeword assigned to $\hat{X}^n$ by the Lempel-Ziv algorithm \cite{LZ}.  Clearly, given the size of the search space, this is not an implementable algorithm. An approach for approximating the solution of \eqref{eq: exhaustive_search} using Markov chain Monte Carlo methods has been suggested in  \cite{JalaliW:09_arxiv}. One problem with the MCMC-based algorithms is that no useful bound is yet known on the required number of iterations. Moreover, the performance of the algorithm depends on the cooling process chosen. There exist cooling schedules with guaranteed convergence, but they are very slow, and usually not used in practice. On the other hand, if we use faster cooling processes, there is a risk of getting stuck in a local minima and missing the optimum solution. The goal of this paper is to propose a new approach for approximating the solution of \eqref{eq: exhaustive_search}. This new approach, as we show later,  suggests a new implementable algorithm for lossy compression. The main idea here is using linear approximation of the conditional entropy function, $H(\mb)$,  at some point $\mb_0$, and proving that if $\mb_0$ is chosen correctly, then while we have reduced the exhaustive search algorithm to the Viterbi algorithm, we have not changed its performance.

\section{Linearized cost function}\label{sec: linearized cost}
Consider the problems (P1) and (P2) described  by \eqref{eq:P1} and \eqref{eq:P2} respectively, where (P1) corresponds to the  optimization required by the exhaustive search lossy compression scheme described in \eqref{eq: exhaustive_search}, and (P2) involves a similar optimization problem. The difference between (P1) and (P2) is that the term corresponding to conditional empirical entropy in (P1), which is a highly non-linear function of $\mb$, is replaced by a linear function of $\mb$.
\begin{align}
(\textmd{P1}):\quad\min\limits_{y^n}\;\;\left[H(\mb(y^n))+\a d_n(x^n,y^n)\right],\label{eq:P1}
\end{align}
and
\begin{align}
(\textmd{P2}):\quad\min\limits_{y^n}\;\;\left[\sum\limits_{\b}\sum\limits_{\bb} \l_{\b,\bb}m_{\b,\bb}(y^n)+\a  d_n(x^n,y^n)\right],\label{eq:P2}
\end{align}
where $\{\l_{\b,\bb}\}_{\b,\bb}$ are a set of real-valued coefficients. In this section we are interested in answering the following question:\\
Is it possible to choose the set of coefficients $\{\l_{\b,\bb}\}_{\b,\bb}$,  $\b\in\hat{\Xc}$ and $\bb\in\hat{\Xc}^k$, such that (P1) and (P2) have the same set of minimizers, or at least the set of minimizers of (P2) is a subset of the minimizers of (P1)? \\
The reason we are interested in answering this question is that if the answer is affirmative, then instead of solving (P1) one can solve (P2), which we describe in Section \ref{sec: Viterbi coding} can be done efficiently via the Viterbi algorithm.

Let $\Sc_1$ and $\Sc_2$ denote the set of minimizers of (P1) and (P2) respectively. Consider some $z^n\in \Sc_1$, and let
$\mb^*_n=\mb(z^n)$, and let the  coefficients used  in (P2) 
\begin{align}
\l_{\b,\bb}&=\left.\frac{\partial}{\partial m_{\b,\bb}}H(\mb)\right|_{\mb^*_n}\nonumber\\ 
&=\log({\sum_{\b'}m^*_{\b',\bb} \over m^*_{\b,\bb}}).\label{eq: def of lambda}
\end{align}

\begin{theorem}\label{thm:S1_S2}
If the coefficients used in (P2) are chosen according to \eqref{eq: def of lambda}, then the minimum values of  (P1) and (P2) will be the same. Moreover,
\[\Sc_2 \subset \Sc_1\]
and contains all the sequences $w^n\in\Sc_1$ with $\mb(w^n)=\mb_n^*$.
\end{theorem}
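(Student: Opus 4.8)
The plan is to exploit concavity of $H$ (proved in the previous subsection) together with the fact that $H$ is a homogeneous-like function summed over columns, so that the linear map defined by the gradient at $\mb_n^*$ dominates $H$ everywhere and agrees with it at $\mb_n^*$. Concretely, write $L(\mb) \triangleq \sum_{\b,\bb} \l_{\b,\bb} m_{\b,\bb}$ with $\l_{\b,\bb}$ as in \eqref{eq: def of lambda}. The first step is to verify the two-sided inequality $H(\mb) \le L(\mb)$ for every admissible $\mb$ (i.e.\ every $(k+1)^{\rm th}$ order empirical distribution matrix), with equality when $\mb = \mb_n^*$. Since $H(\mb) = \sum_{\bb} (\sum_{\b} m_{\b,\bb}) \Hc(\mb_{\cdot,\bb})$ is concave, it lies below its tangent plane at $\mb_n^*$; the tangent plane is precisely $H(\mb_n^*) + \sum_{\b,\bb} \l_{\b,\bb}(m_{\b,\bb} - m^*_{\b,\bb})$. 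So the inequality reduces to checking that $H(\mb_n^*) = \sum_{\b,\bb} \l_{\b,\bb} m^*_{\b,\bb} = L(\mb_n^*)$, i.e.\ that $H$ evaluated at $\mb_n^*$ equals the inner product of $\mb_n^*$ with its own gradient. This is the ``Euler-type'' identity: plugging \eqref{eq: def of lambda} in, $\sum_{\b,\bb} m^*_{\b,\bb} \log(\sum_{\b'} m^*_{\b',\bb} / m^*_{\b,\bb})$ is exactly the conditional-entropy expression \eqref{eq: alternative representation of Hk}, so equality holds. (One must be slightly careful with zero entries of $\mb_n^*$, where $\l_{\b,\bb}$ is formally $+\infty$; there the convention $0\log 0 = 0$ makes the corresponding term vanish on both sides, and one restricts attention to the face of the simplex on which $\mb_n^*$ has full support, or handles these coordinates by a limiting argument — this bookkeeping is the one genuinely fiddly point.)

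The second step turns this pointwise domination into the claim about optimal values. For any $y^n$, $H(\mb(y^n)) + \a d_n(x^n,y^n) \ge L(\mb(y^n)) + \a d_n(x^n,y^n)$, so taking minima over $y^n$ gives $\min(\textrm{P1}) \ge \min(\textrm{P2})$. Conversely, evaluate both objectives at $z^n \in \Sc_1$: here $\mb(z^n) = \mb_n^*$, so $H(\mb(z^n)) = L(\mb(z^n))$, hence the (P2)-objective at $z^n$ equals the (P1)-objective at $z^n$, which is $\min(\textrm{P1})$; therefore $\min(\textrm{P2}) \le \min(\textrm{P1})$. Combining, $\min(\textrm{P1}) = \min(\textrm{P2})$, and moreover $z^n$ is itself a minimizer of (P2), so $\Sc_1$ with the extra constraint $\mb(\cdot) = \mb_n^*$ is contained in $\Sc_2$.

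The third step is the inclusion $\Sc_2 \subset \Sc_1$. Take any $w^n \in \Sc_2$. Then, using $H \le L$ again and the value equality just established,
\begin{align}
H(\mb(w^n)) + \a d_n(x^n,w^n) &\le L(\mb(w^n)) + \a d_n(x^n,w^n) = \min(\textrm{P2}) = \min(\textrm{P1}),
\end{align}
and since $\min(\textrm{P1})$ is by definition a lower bound on the left-hand side, the inequality is an equality; hence $w^n \in \Sc_1$ (and, incidentally, $H(\mb(w^n)) = L(\mb(w^n))$). Finally, to pin down that $\Sc_2$ ``contains all the sequences $w^n \in \Sc_1$ with $\mb(w^n) = \mb_n^*$'': if $w^n \in \Sc_1$ and $\mb(w^n) = \mb_n^*$, then its (P2)-objective equals its (P1)-objective (by the equality case of the domination at $\mb_n^*$) $= \min(\textrm{P1}) = \min(\textrm{P2})$, so $w^n \in \Sc_2$.

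I expect the only real obstacle to be the first step's boundary bookkeeping: making the tangent-plane / Euler-identity argument rigorous when $\mb_n^*$ has zero entries, where $\partial H / \partial m_{\b,\bb}$ blows up. The clean fix is to note that $H$ restricted to the set of matrices supported on $\mathrm{supp}(\mb_n^*)$ is concave and differentiable at $\mb_n^*$, that every competing $\mb(y^n)$ can have $L(\mb(y^n)) = +\infty$ only harmlessly (it then trivially satisfies $H \le L$), and that for $\mb(y^n)$ with $\mathrm{supp}(\mb(y^n)) \subseteq \mathrm{supp}(\mb_n^*)$ the finite-dimensional concavity inequality applies verbatim. Everything else is a short chain of inequalities.
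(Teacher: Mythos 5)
Your proof is correct in substance and uses the same core idea as the paper (concavity of $H$, so the tangent hyperplane at $\mb_n^*$ dominates $H$), but it is actually \emph{cleaner} than the paper's argument in one place, and it has one small slip.

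The slip is a sign flip in your second step: you established $H \le L$ in step one, so the pointwise inequality is $H(\mb(y^n)) + \a d_n \le L(\mb(y^n)) + \a d_n$, giving $\min(\mathrm{P1}) \le \min(\mathrm{P2})$, not $\ge$; the reverse inequality then comes from evaluating at the fixed $z^n$ that defines $\mb_n^*$, exactly as you say. The conclusion $\min(\mathrm{P1}) = \min(\mathrm{P2})$ is unaffected — this is a typo, not a gap.

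The genuine improvement over the paper's proof is in step three. The paper derives $\Sc_2 \subset \Sc_1$ indirectly: it first asserts, ``by strict concavity of $H(\mb)$,'' that any $w^n$ with $\mb(w^n) \neq \mb_n^*$ cannot be in $\Sc_2$, and only then concludes membership in $\Sc_1$. But the concavity established in Section \ref{sec: conditional} is only weak concavity; strict concavity of the conditional-entropy functional on the set of (stationary) count matrices is never proved, and is not obviously true — conditional entropy $H(Y|X)$ is not strictly concave in the joint $p(x,y)$ in general. Your chain $H(\mb(w^n)) + \a d_n \le L(\mb(w^n)) + \a d_n = \min(\mathrm{P2}) = \min(\mathrm{P1})$ forces equality directly and gives $\Sc_2 \subset \Sc_1$ with no appeal to strictness. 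You also surface the Euler-type identity $\sum_{\b,\bb}\l_{\b,\bb} m^*_{\b,\bb} = H(\mb_n^*)$ explicitly (needed to show the tangent plane coincides with $L$), which the paper leaves implicit in this proof and only writes out later in the proof of Theorem \ref{thm: energy decreases}; your comment on the bookkeeping at zero entries of $\mb_n^*$ is also a point the paper glosses over.
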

\begin{proof}
Since, as proved earlier,  $H(\mb)$ is concave in $\mb$, for any empirical
count matrix $\mb$, we have
\begin{align}
H(\mb) &\leq H(\mb^*) + \sum\limits_{\b,\bb}
\left.\frac{\partial}{\partial m_{\b,\bb}}H(\mb)\right|_{\mb^*_n}(m_{\b,\bb}-m^*_{\b,\bb})\\
 &= H(\mb^*) + \sum\limits_{\b,\bb}
\l_{\b,\bb}(m_{\b,\bb}-m^*_{\b,\bb})\\
&\triangleq \hat{H}(\mb). \label{eq:def of H_hat}
\end{align}
Adding a constant to the both sides of \eqref{eq:def of H_hat},  we conclude that for any $y^n\in\hat{\Xc}^n$,
\begin{align}
H(\mb(y^n))+\a d_n(x^n,y^n)\leq \hat{H}(\mb(y^n))+\a d_n(x^n,y^n).\label{eq: t1}
\end{align}
Taking the minimum of both sides of \eqref{eq: t1} yields
\begin{align}
\min\limits_{y^n}[H(\mb(y^n))+\a d_n(x^n,y^n)]&\leq
\min\limits_{y^n}[\hat{H}(\mb(y^n))+\a d_n(x^n,y^n)]\\
&\leq \hat{H}(\mb(z^n))+\a d_n(x^n,z^n)\\
&= H(\mb(z^n))+\a d_n(x^n,z^n)\\
&=\min\limits_{y^n}[H(\mb(y^n))+\a d_n(x^n,y^n)],
\end{align}
because $z^n\in\Sc_1$. Therefore,
\begin{align}
\min\limits_{y^n}[H(\mb(y^n))+\a d_n(x^n,y^n)] =\min\limits_{y^n}[\hat{H}(\mb(y^n))+\a d_n(x^n,y^n)],
\end{align}
i.e., (P1) and (P2) have the same minimum values.

For any sequence $w^n$ with $\mb(w^n)\neq \mb^*_n$, by strict concavity of $H(\mb)$,
\begin{align}
\hat{H}(\mb(w^n))+\a d_n(x^n,w^n)&>H(\mb(w^n))+\a d_n(x^n,w^n),\\
&\geq\min_{y^n} [H_k(y^n)+\a d_n(x^n,y^n)].
\end{align}
Hence, the empirical count matrices of all the sequences in $\Sc_2$, i.e., all the minimizers of (P2) for the selected coefficients, are equal to $\mb^*_n$.

Let $w^n\in\Sc_2$. We prove that $w^n\in\Sc_1$ as well. As we just proved, $\mb(w^n)=\mb(z^n)=\mb_n^*$.  Moreover, since both $z^n$ and $w^n$ belong to $\Sc_2$,
\begin{align}
\min_{y^n}[\hat{H}(\mb(y^n))+\a d_n(x^n,y^n)]&=\hat{H}(\mb(w^n))+\a d_n(x^n,w^n)\nonumber\\
&=\hat{H}(\mb(z^n))+\a d_n(x^n,z^n).
\end{align}
Therefore, $d_n(x^n,w^n)=d_n(x^n,z^n)$, and consequently,
\begin{align}
H_k(w^n)+\a d_n(w^n,x^n)&=H_k(z^n)+\a d_n(z^n,x^n),\nonumber\\
&=\min_{y^n}[H_k(y^n)+\a d_n(y^n,x^n)],
\end{align}
which proves that $w^n\in\Sc_1$, and concludes the proof.
\end{proof}

Theorem \ref{thm:S1_S2} states that if the optimal type $\mb^*_n$ is known, then  the desired coefficients can be computed according to \eqref{eq: def of lambda}, and solving (P2) instead of (P1) using the computed coefficients finds a minimizer of (P1). In Section \ref{sec: Viterbi coding}, we describe  how (P2) can be solved efficiently using Viterbi algorithm for a given set of coefficients.  The problem of course is that the optimal type $\mb^*_n$ required for computing the desired coefficients is not known to the encoder (since knowledge of $\mb_n^*$ seems to require solving (P1) which is the problem we are trying to avoid). In Section \ref{sec: how to choose}, we introduce  another optimization problem whose solution is a good approximation of  $\mb^*_n$, and hence of the desired coefficients $\{\l_{\b,\bb}\}$  when substituting in \eqref{eq: def of lambda}.

\section{Computing the coefficients}\label{sec: how to choose}
As mentioned in the previous section, there exists a set of coefficients for which (P1) and (P2) have the same value. However, computing the desired coefficients requires the knowledge of $\mb_n^*$ which is not available without solving (P1). In order to alleviate this issue, in this section we introduce another optimization problem that gives an asymptotically tight approximation of $\mb_n^*$, and therefore a reasonable approximation of the set of coefficients. 

For a given sequence $x^n$ and a given order $k$, let $\Mc^{(k)}=\Mc^{(k)}(x^n)$  be the set of all jointly stationary probability distributions on $(X^{k},\hat{X}^{k})$ (in the sense of Lemma \ref{lemma: stationary}) such that their marginal distributions with respect to $X$ coincide with the $k^{\rm{th}}$ order empirical distribution induced by $x^n$ defined as follows
\begin{align} \label{eq: emp dist}
\hat{p}_{[x^n]}^{(k)}(a^{k})&\triangleq\frac{|\{1\leq i\leq n:(x_{i-k},\ldots,x_{i-1})=a^{k}\}|}{n},\nonumber\\
                                &=\frac{1}{n}\sum\limits_{i=1}^n\ind_{x_{i-k}^{i-1}=a^{k}},
\end{align}
where $a^{k}\in\Xc^{k}$. More specifically a distribution $p^{(k)}$ in $\Mc^{(k)}$ should satisfy the following  two constraints:
\begin{enumerate}
\item Stationarity condition: as described in Section \ref{subsec:stationary}, for any $a^{k-1}\in\Xc^{k-1}$ and $b^{k-1}\in\hat{\Xc}^{k-1}$,
\begin{align}
\sum\limits_{a_k\in\Xc,b_k\in\hat{\Xc}} p^{(k)}(a^k,b^k)=\sum\limits_{a_k\in\Xc,b_k\in\hat{\Xc}} p^{(k)}(a_ka^{k-1},b_kb^{k-1}).
\end{align}
\item Consistency: for each $a^k\in\Xc^k$,
\begin{align}
\sum\limits_{b^k\in\hat{\Xc}^k} p^{(k)}(a^k,b^k)=\hat{p}_{[x^n]}^{(k)}(a^{k}).
\end{align}
\end{enumerate}

For given $x^n$, $k$ and $\ell>k$, consider the following optimization problem
\begin{align}
\min&\quad H(\hat{X}_{k+1}|\hat{X}^k)+\a\E d(X_1,\hat{X}_1)\nonumber\\
\textmd{s.t.}&\hspace{4mm} (X^{\ell},\hat{X}^{\ell})\sim p^{(\ell)}\nonumber\\
             &\hspace{4mm} p^{(\ell)}\in\Mc^{(\ell)}.\label{eq: original}
\end{align}

\begin{remark} Note that  the rate-distortion function of a stationary ergodic process $\mathbf{X}$ has the following representation  \cite{GrayN:75}:
\begin{align}
R(D,\mathbf{X})&=\inf\{\bar{H}(\mathbf{\hat{X}}): \;(\mathbf{X},\mathbf{\hat{X}})\;\; {\rm jointly} \; {\rm stationry}\; {\rm and}\; {\rm ergodic}, \; {\rm and }\; \E d(X_0,\hat{X}_0)\leq D\}, \nonumber\\
&=\inf_{k \geq 1} \inf \{ H(\hat{X}_{k+1}|\hat{X}^k): (\mathbf{X}, \hat{\mathbf{X}}) \;{\rm jointly}\;{\rm stationary}\;{\rm and}\;{\rm ergodic,}\; {\rm and}\; \E d(X_0, \hat{X}_0) \leq D \}, \label{eq: R_D_alter}
\end{align}
where $\bar{H}(\hat{\bold{X}})$ denotes the entropy rate of the stationary ergodic process $\hat{\bold{X}}$, \ie
\begin{align}
\bar{H}(\hat{\mathbf{X}}) \triangleq\lim\limits_{n\to\infty}H(\hat{X}_{n+1}|\hat{X}^n).\label{eq: entropy_rate}
\end{align}

This representation gives the motivating intuition behind the optimization described in \eqref{eq: original}. It shows  that \eqref{eq: original} is basically performing  the search required by \eqref{eq: R_D_alter}.
\end{remark}

Using the properties  of the set $\Mc^{(\ell)}$, and the definition of conditional empirical entropy,  \eqref{eq: original} can be written more explicitly as
\begin{align}
\min\quad H(\mb)&+\a \sum\limits_{a\in\Xc,b\in\hat{\Xc}} d(a,b)q(a,b)\nonumber\\
\textmd{s.t.}\hspace{10mm} & 0\leq  p^{(\ell)}(a^{\ell},b^{\ell})\leq 1,\quad\forall\;\a^{\ell}\in\Xc^{\ell},b^{\ell}\in\hat{\Xc}^{\ell},\nonumber\\
& \sum_{a^{\ell},b^{\ell}}p^{(\ell)}(a^{\ell},b^{\ell})=1,\quad \forall \; a^{\ell}\in\Xc^{\ell},b^{\ell}\in\hat{\Xc^{\ell}},\nonumber\\
& \sum\limits_{a_{\ell}\in\Xc,b_{\ell}\in\hat{\Xc}} p^{(\ell)}(a^{\ell},b^{\ell})=\sum\limits_{a_{\ell}\in\Xc,b_{\ell}\in\hat{\Xc}} p^{({\ell})}(a_{\ell}a^{{\ell}-1},b_kb^{{\ell}-1}),\nonumber\\
&\hspace{3.8cm}\quad\forall\;
a^{\ell-1}\in\Xc^{\ell-1},b^{\ell-1}\in\hat{\Xc}^{\ell-1},\nonumber\\
&\sum\limits_{b^{\ell}\in\hat{\Xc}^{\ell}} p^{(\ell)}(a^{\ell},b^{\ell})=\hat{p}_{[x^n]}^{(\ell)}(a^{\ell})\quad\forall\; a^{\ell}\in\Xc^{\ell},\nonumber\\
&q(a,b)=\sum\limits_{a^{\ell-1}\in\Xc^{\ell-1},b^{\ell-1}\in\hat{\Xc}^{\ell-1}}p^{(\ell)}(aa^{\ell-1},bb^{\ell-1})\nonumber\\
& m_{\b,\bb}=\sum\limits_{a^{\ell}\in\Xc^{\ell},b^{\ell-k}\in\hat{\Xc}^{\ell-k}}p^{(\ell)}(a^{\ell},\bb\b b^{\ell-k}),\quad\forall\;\b,\bb.\label{eq: optimization}
\end{align}

Note that the optimization in \eqref{eq: optimization} is done over the joint distributions $p^{(\ell)}$ of $(X^{\ell},\hat{X}^{\ell})$. Let $\hat{\Pc}_n^{*}$ denote the set of minimizers of \eqref{eq: optimization}, and $\hat{\Sc}_n^*$ be their $(k+1)^{\rm th}$ order marginalized versions with respect to $\hat{X}$. Let $\{\hat{\l}_{\b,\bb}\}_{\b,\bb}$ be the coefficients evaluated at some $\hat{\mb}_n^*\in\hat{\Sc}_n^*$ using \eqref{eq: def of lambda}. Let $\mathbf{X}$ be a stationary ergodic source, and $R(\mathbf{X}, D)$ denote its rate distortion function. Finally, let $\hat{X}^n$ be the reconstruction sequence obtained by solving (P2) (recall \eqref{eq:P2}) at the evaluated coefficients.

\begin{theorem}\label{thm: main Viterbi coeff}
If $k = k_n = o(\log n)$, $\ell = \ell_n = o(n^{1/4})$ and  $k=o(\ell)$ such that $k_n, \ell_n\to\infty$, as $n\to\infty$, then  for any stationary ergodic source 
\begin{equation}\label{eq: achieving optimal point on the rd curve}
 H_{k}(\hat{X}^n) +\a  d_n (X^n, \hat{X}^n )  \stackrel{n \rightarrow \infty}{\longrightarrow}  \min_{D \geq 0} \left[ R(\mathbf{X}, D)
    +\a D \right],\;\;{\rm a.s.}
\end{equation}

\end{theorem}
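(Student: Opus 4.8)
The plan is to establish \eqref{eq: achieving optimal point on the rd curve} by sandwiching the quantity $H_{k}(\hat X^n)+\a d_n(X^n,\hat X^n)$ between two bounds that both converge to $\min_{D\geq 0}[R(\mathbf X,D)+\a D]$ almost surely. First I would record the obvious chain $\min_{D\geq 0}[R(\mathbf X,D)+\a D]\leq \frac1n\ell_{\mathrm{LZ}}(\hat X^n)+\a d_n(X^n,\hat X^n)$-type inequalities and the fact from the exhaustive-search section (\eqref{eq: exhaustive_search} and the cited universality of that scheme) that the value of (P1) for input $X^n$, namely $\min_{y^n}[H_k(y^n)+\a d_n(X^n,y^n)]$, converges almost surely to $\min_{D\geq 0}[R(\mathbf X,D)+\a D]$ when $k=k_n=o(\log n)$. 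This is the \emph{target} value; everything else is about showing the Viterbi reconstruction $\hat X^n$ attains it in the limit.

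For the upper bound, the key point is that $\hat X^n$ minimizes the \emph{linearized} cost of (P2) at the approximate coefficients $\{\hat\l_{\b,\bb}\}$ derived from $\hat\mb_n^*\in\hat\Sc_n^*$. By concavity of $H(\mb)$ (established in Section~\ref{sec: conditional}) one has the tangent-plane bound $H(\mb)\leq \hat H(\mb)\triangleq H(\hat\mb_n^*)+\sum_{\b,\bb}\hat\l_{\b,\bb}(m_{\b,\bb}-\hat m^*_{\b,\bb})$ for every empirical matrix $\mb$, exactly as in the proof of Theorem~\ref{thm:S1_S2}; hence $H_k(\hat X^n)+\a d_n(X^n,\hat X^n)\leq \hat H(\mb(\hat X^n))+\a d_n(X^n,\hat X^n)$. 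Since $\hat X^n$ is the (P2)-minimizer, $\hat H(\mb(\hat X^n))+\a d_n(X^n,\hat X^n)$ is at most $\hat H(\mb(w^n))+\a d_n(X^n,w^n)$ for any competitor $w^n$; choosing $w^n$ to be a sequence whose type matches a near-optimal stationary distribution from $\Mc^{(\ell)}$ (one realizing, or nearly realizing, the value of the optimization \eqref{eq: optimization}), and using that $\hat H$ agrees with $H$ at the tangent point $\hat\mb_n^*$ up to the linear slack $\sum_{\b,\bb}\hat\l_{\b,\bb}(m_{\b,\bb}(w^n)-\hat m^*_{\b,\bb})$, one bounds the whole thing by (value of \eqref{eq: optimization}) plus error terms that must be shown to vanish. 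The lower bound is cheaper: $H_{k}(\hat X^n)+\a d_n(X^n,\hat X^n)\geq \min_{y^n}[H_k(y^n)+\a d_n(X^n,y^n)]$, which already converges to the target.

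The substantive work, and the main obstacle, is showing that the value of the optimization \eqref{eq: optimization} converges almost surely to $\min_{D\geq 0}[R(\mathbf X,D)+\a D]$ as $n\to\infty$ with $\ell=\ell_n\to\infty$, $\ell_n=o(n^{1/4})$, and then that the induced coefficient error does not spoil the sandwich. The first half is an ergodic-theoretic approximation argument: the constraint set $\Mc^{(\ell)}$ forces the $X$-marginal to equal the $\ell$-th order empirical distribution of $x^n$, which by ergodicity converges (in the appropriate sense, uniformly over the relevant finite-dimensional marginals as long as $\ell_n$ grows slowly enough — this is where $\ell_n=o(n^{1/4})$ and a Glivenko–Cantelli/typicality estimate enter) to the true $\ell$-th order law of $\mathbf X$; combined with the representation \eqref{eq: R_D_alter} of $R(D,\mathbf X)$ as an infimum over jointly stationary couplings and a continuity/uniform-continuity argument for $H(\hat X_{k+1}|\hat X^k)+\a\E d$ in the coupling, one gets that the optimum of \eqref{eq: optimization} is squeezed between $\min_{D}[R(\mathbf X,D)+\a D]$ (from below, since any feasible $p^{(\ell)}$ extends to a stationary process and hence is admissible in \eqref{eq: R_D_alter} up to the $k$ versus $\ell$ gap, which vanishes because $k=o(\ell)$) and the same quantity plus $o(1)$ (from above, by quantizing an optimal coupling to empirical type). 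The second half — propagating a small perturbation of $\hat\mb_n^*$ through \eqref{eq: def of lambda} — is delicate because $\l_{\b,\bb}=\log(\sum_{\b'}m^*_{\b',\bb}/m^*_{\b,\bb})$ blows up when some $m^*_{\b,\bb}\to 0$, so one must either argue the relevant entries are bounded away from $0$ along the optimizers, or control $\sum_{\b,\bb}\hat\l_{\b,\bb}(m_{\b,\bb}-\hat m^*_{\b,\bb})$ directly using that $x\log x$ is continuous at $0$ and that the $\ell_1$ distance between $\mb(\hat X^n)$ and $\hat\mb_n^*$ is itself $o(1)$ because both are near-optimal for essentially the same linear program; quantifying this last $\ell_1$ closeness, with explicit dependence on $k_n$ and $\ell_n$, is the step I expect to require the most care.
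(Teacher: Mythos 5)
Your overall sandwich structure is the right one, and you correctly identify both the lower bound $H_k(\hat X^n)+\a d_n(X^n,\hat X^n)\geq \min_{y^n}[H_k(y^n)+\a d_n(X^n,y^n)]$ (with its convergence to $\min_D[R(\mathbf{X},D)+\a D]$ cited from prior work) and the tangent-plane upper bound from concavity. You also flag a real soft spot that the paper itself glosses over: the map $\mb\mapsto\boldsymbol\l$ in \eqref{eq: def of lambda} is not continuous near the boundary of the simplex, so ``propagate the $\ell_1$-closeness through $\Lambda$'' needs more than the paper's one-line appeal to continuity. That is a legitimate critical observation.

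The genuine gap is in the step ``choose $w^n$ to be a sequence whose type matches a near-optimal stationary distribution from $\Mc^{(\ell)}$.'' This is precisely the hard part, and you leave it as an existence claim. Not every distribution in $\Mc^{(\ell)}$ is an empirical type at blocklength $n$, and even when one exists, you need almost-sure control that it is close enough to $p^{*(k+1)}$ to kill the linear slack. The paper resolves this with a concrete randomized construction: given an optimizer $p^{*(\ell)}$ of \eqref{eq: optimization}, it partitions $x^n$ into $\lceil n/\ell\rceil$ blocks and samples each $\tilde X_{(i-1)\ell+1}^{i\ell}$ independently from the conditional $p^{*(\ell)}(\hat X^\ell\mid X^\ell=x_{(i-1)\ell+1}^{i\ell})$, then decomposes $\{\ind_{\tilde X_{i-k}^{i}=a^{k+1}}\}$ into $\ell-k$ subsequences of independent variables so Hoeffding and Borel--Cantelli apply; this is exactly where the rates $k=o(\log n)$, $\ell=o(n^{1/4})$, $k=o(\ell)$ are used, to make the union-bounded tail summable. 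Without this construction the proof cannot close. Relatedly, the paper does not prove (and you should not assert) that $\|\mb(\hat X^n)-\hat\mb_n^*\|_1\to 0$; what it proves is that $\|\hat p^{(k+1)}_{[\tilde X^n]}-p^{*(k+1)}\|_1\to 0$ a.s.\ for the \emph{constructed} $\tilde X^n$, and then uses Theorem~\ref{thm: energy decreases} (so that the comparison happens at coefficients evaluated at $\mb(\tilde X^n)$ itself, where $\Lambda\odot\mb(\tilde X^n)=H(\mb(\tilde X^n))$ holds exactly and the linear slack vanishes) together with continuity of $f(\boldsymbol\l)$ as a minimum of affine functions of $\boldsymbol\l$ to bridge to the actual coefficients at $p^{*(k+1)}$. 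Finally, the paper does not re-derive that $\hat C_n^*$ converges to $\min_D[R(\mathbf{X},D)+\a D]$ via an ergodic argument as you propose; it shows $\hat C_n^*\leq C_n^*\leq \hat C_n^*+\e_n$ (feasibility of the empirical joint gives the left inequality, the $\tilde X^n$ construction gives the right), and then rides the known universality of $C_n^*$. That is a lighter route, and you should adopt it rather than re-proving the rate-distortion approximation from scratch.
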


The proof of Theorem \ref{thm: main Viterbi coeff} is presented in Appendix A.

\begin{remark} Theorem \ref{thm: main Viterbi coeff} implies the fixed-slope universality of the scheme which does the lossless compression of the reconstruction by first describing its count matrix (costing a number of bits which is negligible for large n) and then doing the conditional entropy coding. 
\end{remark}

\begin{remark}
Note that all the constraints in \eqref{eq: optimization} are linear, and the cost is a concave function. Hence, overall, we have a concave minimization problem 
(of dimension $|\Xc|^{\ell}|\hat{\Xc}|^{\ell}+|\hat{\Xc}|^{k+1}+|\Xc| |\hat{\Xc}|$). \textcolor{white}{aaa}

\end{remark}

\section{Viterbi coder}\label{sec: Viterbi coding}
In this section, we show how, for a given set of coefficients, $\{\l_{\bb,\b}\}$, (P2) can be solved efficiently via the Viterbi algorithm \cite{Viterbi:67}, \cite{Forney:73}.

Note that the linearized cost used in (P2) can also be written as
\begin{align}
&\sum_{\substack{\bb\in\hat{\Xc}^k\\\b\in\hat{\Xc}}} \left[\l_{\b,\bb}m_{\b,\bb}(y^n)+\a d_n(x^n,y^n)\right]=\frac{1}{n}\sum\limits_{i=1}^n\left[
\l_{y_i,y_{i-k}^{i-1}}+\a d(x_i,y_i)\right].\label{eq:3}
\end{align}
The advantage of this alternative representation is that, as we will describe, instead of using simulated annealing, we can find the
sequence that exactly minimizes \eqref{eq:3} via the Viterbi algorithm, which is a dynamic programming optimization method
for finding the path of minimum weight in a Trellis diagram efficiently.  For $i=k+1,\ldots,n$, let
\begin{align}
s_i\triangleq y_{i-k}^i\
\end{align}
to be the state at time $i$, and define $\cal{S}$ to be the set of all $|\hat{\Xc}|^{k+1}$
possible states.  From this definition, the state at time $i$, $s_i$, is determined by the
state at time $i-1$, $s_{i-1}$, and $y_i$. In other words, $s_{i}=g(s_{i-1},y_i)$, for some
\[g:\Sc\times\hat{\Xc}\to\Sc.\]
This representation leads to a Trellis diagram corresponding to the evolution of the states
$\{s_i\}_{i=k+1}^n$ in which each state has $|\hat{\Xc}|$
states leading to it and $|\hat{\Xc}|$ states branching from
it. To the edge $e=(s',s)$ connecting states $s'$ and $s=b^{k+1}$ at stage $i$, we assign the weight $w_i(e)$ defined as
\begin{align}
w_i(e):=\l_{b_{k+1},b^k}+\a d(x_i,b_{k+1}).\label{eq: def w}
\end{align}
 In this representation, there is a 1-to-1 correspondence between sequences $y^n\in\hat{\Xc}^n$, and
sequences of states $\{s_i\}_{i=k+1}^n$, and minimizing \eqref{eq:3} is equivalent to finding the path of minimum
weight in the corresponding Trellis diagram, i.e., the path $\{s_i\}_{i=k+1}^n$ that minimizes $\sum_{i=k+1}^n w_i(e_i)$, where $e_i=(s_{i-1},s_i)$. Solving this minimization can readily be done by the Viterbi
algorithm which can be described as follows. For each state $s$, let $\mathcal{L}(s)$ be the $|\hat{\Xc}|$ states leading to it, and for any $i>1$, define
\begin{align}
C_i(s):=\min\limits_{s'\in\mathcal{L}(s)}[w_i((s',s))+C_{i-1}(s')].
\end{align}
For $i=1$ and $s=b^{k+1}$, let $C_1(s):=\l_{b_{k+1},b^k}+\a
d_{k+1}(x^{k+1},b^{k+1})$. Using this procedure, each state $s$ at
each time $j$ has a path of length $j-k-1$ which is the minimum path
among all the possible paths between the states from time $i=k+1$ to
$i=j$ such that $s_j=s$. After computing
$\{C_i(s)\}$ for all $s\in\Sc$ and all $i\in\{k+1,\ldots,n\}$, at time
$i=n$, let
\begin{align}
s^*=\argmin_{s\in\Sc}C_n(s).
\end{align}
It is not hard to see  that the path leading to $s^*$ is the
path of minimum weight among all possible paths.

Note that the computational complexity of this procedure is linear in $n$ but exponential in $k$ because the number of states increases exponentially with $k$. Therefore, given the coefficients $\{\l_{\bb,\b}\}$, solving
(P2) is straightforward using the Viterbi algorithm. The problem is
finding an approximation of the optimal coefficients. The
procedure outlined in Section \ref{sec: linearized cost} for
finding the coefficients involves solving a concave minimization problem  of dimension that becomes intractable even for moderate values of $n$. To bypass this process, an alternative heuristic
method is proposed in the next section. The effectiveness of this approach is discussed in the next
section through some simulations.

\section{Approximating the optimal coefficients}\label{sec: viterbi iterative}
As we discussed in Section \ref{sec: linearized cost}, having known the optimal coefficients, solving (P2) which can be done using the Viterbi algorithm is equivalent to solving (P1) which has exponential complexity in $n$. However, the problem is finding such desired coefficients. In Section \ref{sec: how to choose}, it was proposed that for finding a good approximation of these coefficients, one method is to solve \eqref{eq: optimization} and find $\hat{\mb}^{*}$. Then an approximation of the coefficients $\{\l_{\b,\bb}\}$ can be made via \eqref{eq: def of lambda}  by evaluating the partial derivatives of $H(\mb)$ at $\hat{\mb}^{*}$. But solving \eqref{eq: optimization} requires solving a concave minimization problem of dimension which is demanding for even moderate values of $n$. Therefore, in this section, we consider a detour with moderate computational complexity.

First, assume that the desired distortion is small, or equivalently $\a$ is large. In that case, the distance between the original sequence $x^n$ and its quantized version $\hat{x}^n$ should be small. Therefore, their types, i.e., their $(k+1)^{\rm th}$ order empirical distributions, are close. Hence, the  coefficients computed based on $\mb(x^n)$ provide a reasonable approximation of the coefficients derived from $\mb^*$. This implies that if our desired distortion is small, one possibility is to compute the type of the input sequence, and evaluate the coefficients at $\mb(x^n)$.

In the case where the desired distortion is not very small, we can use an iterative approach as follows. Start with $\mb(x^n)$. Compute the coefficients from \eqref{eq: def of lambda} at $\mb(x^n)$. Employ Viterbi algorithm to solve (P2) at the computed coefficients. Let $\hat{x}^n$ denote the output sequence. Compute $\mb(\hat{x}^n)$, and recalculate the coefficients using \eqref{eq: def of lambda} at $\mb(\hat{x}^n)$. Again, use Viterbi algorithm to solve (P2) at the updated coefficients. Iterate.

For a conditional empirical distribution matrix $\mb$, define its coefficient matrix as $\Lambda(\mb)$, where $\l_{\b,\bb}$ is defined as \eqref{eq: def of lambda}.
For two matrices $A$ and $B$ of the same dimensions, define the scalar product of $A$ and $B$ as
\[
A\odot B \triangleq \sum_{i,j}A_{i,j}B_{i,j}.
\]
Now succinctly, the iterative approach can be described as follows. For $t=0$, let $y^{n,(0)}=x^n$. For $t=1,2,\ldots$
\begin{align*}
\Lambda^{(t)}&=\Lambda(\mb(y^{n,(t-1)})),\nonumber\\
y^{n,(t)}&=\argmin_{z^n\in\hat{\Xc}^n}[\Lambda^{(t)}\odot\mb(z^n)+\a d(x^n,z^n)].
\end{align*}
Stop as soon as $y^{n,(t)} = y^{n,(t-1)}$. 

For a given sequence $x^n$, and slope $\a$, assign to each sequence $y^n\in\hat{\Xc}^n$ the energy
\begin{align}
 \Ec(y^n)=H_k(y^n)+\a d(x^n,y^n).
\end{align}
As mentioned before, the goal is to find the sequence with minimum energy.  Theorem \ref{thm: energy decreases} below gives some justification on how the described  approach serves this purpose. It shows that,  through the iterations,  the energy level of the output is decreasing at each step. Moreover, since the number of energy levels is finite, it proves that the algorithm converges in a finite number of iterations.


\begin{theorem}\label{thm: energy decreases}
For the described iterative algorithm, at each $t\geq1$,
\begin{align}
\Ec(y^{n,(t+1)})\leq \Ec(y^{n,(t)}).
\end{align}
\end{theorem}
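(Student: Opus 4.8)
The plan is to show that each iteration of the algorithm cannot increase the energy $\Ec(\cdot) = H_k(\cdot) + \a d_n(x^n, \cdot)$, by exploiting the concavity of $H(\mb)$ exactly as in the proof of Theorem~\ref{thm:S1_S2}. Fix $t \geq 1$ and write $\mb^{(t)} = \mb(y^{n,(t)})$, and let $\Lambda^{(t+1)} = \Lambda(\mb^{(t)})$ be the coefficient matrix obtained by evaluating the partial derivatives of $H$ at $\mb^{(t)}$. The tangent plane to $H$ at $\mb^{(t)}$, namely
\begin{align}
\Hh(\mb) \triangleq H(\mb^{(t)}) + \sum_{\b,\bb} \l^{(t+1)}_{\b,\bb}\left(m_{\b,\bb} - m^{(t)}_{\b,\bb}\right),
\end{align}
satisfies $H(\mb) \leq \Hh(\mb)$ for every empirical count matrix $\mb$, by concavity, with equality at $\mb = \mb^{(t)}$. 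Note that $\Hh(\mb(z^n)) = \Lambda^{(t+1)} \odot \mb(z^n) + c$ where $c = H(\mb^{(t)}) - \Lambda^{(t+1)} \odot \mb^{(t)}$ is a constant not depending on $z^n$, so minimizing the (P2)-cost at coefficients $\Lambda^{(t+1)}$ is the same as minimizing $\Hh(\mb(z^n)) + \a d_n(x^n, z^n)$.

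Now I would chain the inequalities. By definition of the algorithm, $y^{n,(t+1)}$ minimizes $\Hh(\mb(z^n)) + \a d_n(x^n,z^n)$ over $z^n$, so in particular it does no worse than the choice $z^n = y^{n,(t)}$:
\begin{align}
\Hh\big(\mb(y^{n,(t+1)})\big) + \a d_n\big(x^n, y^{n,(t+1)}\big) \;\leq\; \Hh\big(\mb(y^{n,(t)})\big) + \a d_n\big(x^n, y^{n,(t)}\big).
\end{align}
On the right-hand side, $\Hh(\mb^{(t)}) = H(\mb^{(t)}) = H_k(y^{n,(t)})$ since the tangent plane touches $H$ at $\mb^{(t)}$, so the right-hand side equals $\Ec(y^{n,(t)})$. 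On the left-hand side, concavity gives $\Hh(\mb(y^{n,(t+1)})) \geq H(\mb(y^{n,(t+1)})) = H_k(y^{n,(t+1)})$, so the left-hand side is at least $H_k(y^{n,(t+1)}) + \a d_n(x^n, y^{n,(t+1)}) = \Ec(y^{n,(t+1)})$. Combining these two observations with the displayed inequality yields $\Ec(y^{n,(t+1)}) \leq \Ec(y^{n,(t)})$, which is the claim.

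There is really no hard step here; the only thing to be careful about is the bookkeeping of which direction each inequality goes — the tangent plane lies \emph{above} $H$ everywhere but \emph{coincides} with it at the point of tangency, and one must use the ``above'' bound at the new iterate $y^{n,(t+1)}$ and the ``coincides'' equality at the previous iterate $y^{n,(t)}$. (For the base case one also notes $y^{n,(0)} = x^n$ so the first iteration is covered by the same argument with $t=0$, though the statement is phrased for $t\geq 1$.) Finally, since $H_k(\cdot)$ and $d_n(x^n,\cdot)$ each take only finitely many values as $y^n$ ranges over $\hat{\Xc}^n$, the energy $\Ec$ takes finitely many values, so the nonincreasing sequence $\Ec(y^{n,(t)})$ must eventually stabilize; combined with the stopping rule, this shows the algorithm terminates in finitely many iterations, as claimed in the discussion preceding the theorem.
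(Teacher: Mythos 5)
Your proof is correct and follows essentially the same route as the paper: concavity of $H(\mb)$ gives a linear overestimate (your tangent plane $\hat H$), and the optimality of $y^{n,(t+1)}$ for the linearized cost then sandwiches $\Ec(y^{n,(t+1)})$ below $\Ec(y^{n,(t)})$. The one cosmetic difference is that you never need the identity $\Lambda(\hat{\mb})\odot\hat{\mb}=H(\hat{\mb})$ (which the paper verifies by direct computation as equation \eqref{eq: equality of H(m)}, and which holds because $H$ is positively homogeneous of degree one); you instead absorb the offset into a constant $c$ that cancels on both sides, so the tangent-point equality $\hat H(\mb^{(t)})=H(\mb^{(t)})$ is all you use, and that is immediate from the definition of $\hat H$. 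This is a marginally cleaner bookkeeping of the same argument, not a different proof.
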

\begin{proof}
For the ease of notations, let $\hat{x}^n = y^{n,(t)}$, $\hat{\mb}=\mb(\hat{x}^n)$, and $\hat{\Lambda}=\Lambda(\hat{\mb})$. Similarly, let $\tilde{x}^n=y^{n,(t+1)}$, $\tilde{\mb}=\mb(\tilde{x}^n)$, and $\tilde{\Lambda}=\Lambda(\tilde{\mb})$. From the concavity of $H(\mb)$ in $\mb$,
\begin{align}\label{eq: concavity of H}
H(\tilde{\mb})\leq H(\hat{\mb})+\hat{\Lambda}\odot(\tilde{\mb}-\hat{\mb}),
\end{align}
where $A\odot B$ with  $A$ and $B$ two matrices of the same dimensions is equal to $\sum\limits_{i,j} a_{i,j}b_{i,j}$. On the other hand
\begin{align}\label{eq: equality of H(m)}
\hat{\Lambda}\odot\hat{\mb} &= \sum_{\b,\bb}\hat{\l}_{\b,\bb}\hat{m}_{\b,\bb}\nonumber\\
                      &= \sum_{\b,\bb}\hat{m}_{\b,\bb}\log \left(\frac{\sum\limits_{\b'\in\Xc}\hat{m}_{\b',\bb}}{\hat{m}_{\b,\bb}}\right)\\
                      &= H(\hat{\mb}).
\end{align}
Therefore, combining \eqref{eq: concavity of H} and \eqref{eq: equality of H(m)} yields
\begin{align}\label{eq: ineq of H}
H(\tilde{\mb})\leq \hat{\Lambda}\odot\tilde{\mb}.
\end{align}
Adding a constant term to the both sides of \eqref{eq: ineq of H}, we get
\begin{align}\label{eq: ineq of H}
\Ec(\tilde{x}^n)=H(\tilde{\mb})+\a d(x^n,\tilde{x}^n)\leq \hat{\Lambda}\odot\tilde{\mb}+ \a d(x^n,\tilde{x}^n).
\end{align}
But, since $\tilde{x}^n$ is assumed to be a minimizer of (P2) for the computed coefficients,
\begin{align}\label{eq: ineq of energy}
\hat{\Lambda}\odot\tilde{\mb}+ \a d(x^n,\tilde{x}^n) &\leq \hat{\Lambda}\odot\hat{\mb}+ \a d(x^n,\hat{x}^n)\nonumber\\
                                                     & = H(\hat{\mb})+ \a d(x^n,\hat{x}^n)\nonumber\\
                                                     & = \Ec(\hat{x}^n)
\end{align}
Therefore, combining \eqref{eq: ineq of H} and \eqref{eq: ineq of energy} yields the desired result, i.e.,
\begin{align}
\Ec(\tilde{x}^n) \leq \Ec(\hat{x}^n).
\end{align}
\end{proof}

\begin{remark} \label{remark: alg}

In the described iterative algorithm, for any slope $\a$, we assumed that the algorithm starts at $y^{n,(0)}=x^n$. However, as mentioned earlier, only for large values of $\a$, $\mb(x^n)$ provides a reasonable approximation of the desired type $\mb_n^*$. Hence, in order to address this issue, we can slightly modify the algorithm as follows. The idea is that instead of starting at $y^{n,(0)}=x^n$ for all values of $\a$, we can gradually decrease the slope to our desired value, and use the final output of each step as the initial point for the next step. More explicitly, for any given $\a_0$, start from some large slope, $\a_{\max}$, (corresponding to very low distortion). Run the previous iterative  algorithm  and find $\hat{x}^n(\a_{\max})$. Pick some integer $N_{\a}$, and define
\[
\Delta \a\triangleq{\a_{\max}-\a_0 \over N_{\a}}.
\]

Again run the iterative algorithm, but this time at $\a = \a_{\max}-\Delta \a$. Now, instead of starting from $y^{n,(0)}=x^n$, initialize $y^{n,(0)}=\hat{x}^n(\a_{\max})$. Repeat this process $N_\a$ times. I.e, At the $r^{\rm th}$ step, $r=1,\ldots,N_{\a}$, run the algorithm at $\a = \a_{\max}-r\Delta\a$, and initialize $y^{n,(0)}=\hat{x}^n(\a_{\max}-(r-1)\Delta \a)$. At the final step $\a=\a_0$, and we have a reasonable quantized version of $x^n$ for initialization.\textcolor{white}{aaa}

\end{remark}

To gain further insight on $(P2)$, for the coefficients matrix $\Lambda=\{\l_{\b,\bb}\}_{\b,\bb}$, define
\begin{align}
\phi(\Lambda)&=\min\limits_{y^n\in\hat{\Xc}^n}\left[\sum\limits_{\b,\bb}\l_{\b,\bb}m_{\b,\bb}(y^n)+\a d_n(x^n,y^n)\right]\nonumber\\
        &= \min\limits_{y^n\in\hat{\Xc}^n}\left[\Lambda \odot \mb(y^n)+\a d_n(x^n,y^n)\right].
\end{align}
Since $\phi(\Lambda)$ is the minimum of multiple affine functions of $\La$, it is a concave function. To each sequence $y^n\in\hat{\Xc}^n$, assign a coefficient matrix $\La=[\l_{\b,\bb}]$ as
\begin{align}
\l_{\b,\bb} = \left.\frac{\partial H(\mb)}{\partial m_{\b,\bb}}\right|_{\mb(y^n)}.\label{eq:coeff_vs_m}
\end{align}
Let $\mathcal{L}_d$ be the set of all such coefficient matrices. Similarly to each possible conditional distribution matrix $\mb$ on $\hat{\Xc}^{k+1}$ which satisfies the stationarity condition defined in  Section \ref{subsec:stationary}, assign a coefficients matrix $\La$ defined according to \eqref{eq:coeff_vs_m}.  Let $\mathcal{L}_c$ be the set of coefficient matrices calculated at $(k+1)^{\rm th}$ order {\emph stationary} distributions on $\hat{\Xc}^{k+1}$. Note that while $\mathcal{L}_d$ is a discrete set (consisting of no more than $|\mathcal{Y}|^n$ elements), $\mathcal{L}_c$ is continuous.

For a sequence $x^n$, let
\[
\hat{x}^n = \argmin\limits_{y^n\in\hat{\Xc}^n}\Ec(y^n),
\]
and
\[
\La^*\triangleq\La(\hat{x}^n).
\]
Note that $\La^*$ is the optimal coefficients matrix required for replacing (P1) with (P2).

\begin{lemma}
        \begin{align}
        \La^*&=\argmin\limits_{\Lambda\in\mathcal{L}_d}\phi(\La).\label{eq:lem_main}
        \end{align}
        \end{lemma}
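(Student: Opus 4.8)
The plan is to establish \eqref{eq:lem_main} by a sandwiching argument: first I would show that $\phi(\La^*)$ equals the minimum energy $\min_{y^n\in\hat{\Xc}^n}\Ec(y^n)$, and then that $\phi(\La)\ge \min_{y^n}\Ec(y^n)$ for \emph{every} $\La\in\mathcal{L}_d$. Since $\La^*=\La(\hat{x}^n)$ with $\hat{x}^n\in\hat{\Xc}^n$, we have $\La^*\in\mathcal{L}_d$, so these two facts together yield that $\La^*$ attains $\min_{\La\in\mathcal{L}_d}\phi(\La)$, which is the claim.

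For the first step I would simply invoke Theorem \ref{thm:S1_S2}. The cost minimized in (P1) is $H(\mb(y^n))+\a d_n(x^n,y^n)=\Ec(y^n)$, so by definition $\hat{x}^n\in\Sc_1$; moreover the coefficient matrix $\La^*=\La(\hat{x}^n)$ is exactly the one prescribed by \eqref{eq: def of lambda} when the reference sequence $z^n$ there is taken to be $\hat{x}^n$. Theorem \ref{thm:S1_S2} then states that (P1) and (P2) have the same optimal value, i.e. $\phi(\La^*)=\min_{y^n}\Ec(y^n)$.

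For the second step, fix any $\La\in\mathcal{L}_d$, say $\La=\La(w^n)$ for some $w^n\in\hat{\Xc}^n$. By the concavity of $H(\mb)$ proved in Section \ref{sec: conditional}, the supporting-hyperplane inequality gives, for every conditional empirical distribution matrix $\mb$,
\[
H(\mb)\le H(\mb(w^n))+\La\odot(\mb-\mb(w^n)).
\]
By the identity $\La\odot\mb(w^n)=H(\mb(w^n))$ — which is exactly \eqref{eq: equality of H(m)}, following from $\l_{\b,\bb}=\log(\sum_{\b'}m_{\b',\bb}/m_{\b,\bb})$ together with the expression \eqref{eq: alternative representation of Hk} for $H$ — the right-hand side collapses to $\La\odot\mb$, so $H(\mb)\le\La\odot\mb$ for all $\mb$. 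Specializing to $\mb=\mb(y^n)$ and adding $\a d_n(x^n,y^n)$ gives $\Ec(y^n)\le \La\odot\mb(y^n)+\a d_n(x^n,y^n)$; taking the minimum over $y^n$ on both sides yields $\min_{y^n}\Ec(y^n)\le \phi(\La)$. Combined with the first step, $\phi(\La^*)=\min_{y^n}\Ec(y^n)\le\phi(\La)$ for every $\La\in\mathcal{L}_d$.

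I do not expect a genuine obstacle: the lemma is essentially a repackaging of the concavity of $H$ and of Theorem \ref{thm:S1_S2}, both already established. The only point deserving a word of care is the boundary case in which some entry $m_{\b,\bb}(w^n)=0$, so that the associated $\l_{\b,\bb}$ is $+\infty$; with the conventions $0\log 0=0$ and $0\cdot(+\infty)=0$, the identity $\La\odot\mb(w^n)=H(\mb(w^n))$ still holds and the supporting-hyperplane inequality remains valid (becoming trivial whenever $m_{\b,\bb}>0=m_{\b,\bb}(w^n)$), so no separate treatment is needed.
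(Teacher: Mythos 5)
Your argument is correct and follows essentially the same route as the paper's: establish $\phi(\La^*)=\min_{y^n}\Ec(y^n)$ via Theorem~\ref{thm:S1_S2}, and $\min_{y^n}\Ec(y^n)\le\phi(\La)$ for every $\La\in\mathcal{L}_d$ from the supporting-hyperplane inequality together with the identity $\La(w^n)\odot\mb(w^n)=H(\mb(w^n))$ (the paper just cites these two facts from the proofs of Theorems~\ref{thm:S1_S2} and~\ref{thm: energy decreases} rather than re-deriving them). The only cosmetic difference is that the paper evaluates the inequality $\Ec(y^n)\le\La\odot\mb(y^n)+\a d_n(x^n,y^n)$ at the specific (P2)-minimizer $\tilde{x}^n$ whereas you minimize both sides over all $y^n$; your remark on the $m_{\b,\bb}(w^n)=0$ boundary case is a welcome extra precaution not spelled out in the paper.
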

        \begin{proof}
        As shown before,
        \begin{align}
        f(\hat{\La})=\Ec(\hat{x}^n).\label{eq:l1}
        \end{align}
        On the other hand, if $\tilde{x}^n$ is the minimizer of $\sum\limits_{\b,\bb}\l_{\b,\bb}m_{\b,\bb}(y^n)+\a d_n(x^n,y^n)$ for some $\La\in\mathcal{L}_d$, then, as shown in the proof of Theorem \ref{thm: energy decreases},
        \begin{align}
        H(\tilde{\mb})\leq \La \odot \tilde{\mb}.\label{eq:11-2}
        \end{align}
        Therefore,  adding $d(x^n,\tilde{x}^n)$ to both sides of \eqref{eq:11-2} yields
        \begin{align}
        \Ec(\tilde{x}^n) \leq \phi(\La).\label{eq:l2}
        \end{align}
        But, by assumption,
        \begin{align}
        \Ec(\hat{x}^n)\leq\Ec(\tilde{x}^n).\label{eq:l3}
        \end{align}
        Combining \eqref{eq:l1},  \eqref{eq:l2} and  \eqref{eq:l3} yields the desired result.
         \end{proof}

         \begin{remark} Note that
         \begin{align}
         &\min\limits_{\La\in\mathcal{L}_c}\min\limits_{y^n}\left(\ \sum\limits_{\b,\bb}\l_{\b,\bb}m_{\b,\bb}(y^n)+\a d_n(x^n,y^n)\right)\nonumber\\
         &=\min\limits_{y^n}\left[\min\limits_{\La\in\mathcal{L}_c}\left(\ \sum\limits_{\b,\bb}\l_{\b,\bb}m_{\b,\bb}(y^n)\right)+\a d_n(x^n,y^n)\right].
         \end{align}
         But $H(\mb(y^n)) \leq \sum\limits_{\b,\bb}\l_{\b,\bb}m_{\b,\bb}(y^n)$, for any $\La\in\mathcal{L}_c$,  and the lower bound is achieved at $\La(y^n)$. Therefore,
          \begin{align}
         \min\limits_{\La\in\mathcal{L}_c}\min\limits_{y^n}\left(\ \sum\limits_{\b,\bb}\l_{\b,\bb}m_{\b,\bb}(y^n)+\a d_n(x^n,y^n)\right) =\min\limits_{y^n} (H_k(y^n)+\a d(x^n,y^n)).
         \end{align}
         Hence, we can replace $\mathcal{L}_d$ by $\mathcal{L}_c$ in \eqref{eq:lem_main}, and still get the same result. This transform converts the discrete optimization stated in \eqref{eq:lem_main}, which can be solved by exhaustive search, to an optimization over a continuous function of relativley low dimentions.
         \end{remark}

\section{Simulation results}\label{sec: simulations Viterbi}
As the first example, consider an i.i.d.~$\Bern(p)$ source with $p=0.5$. Fig.~\ref{fig:1-1} shows the performance of the iterative algorithm described in Section \ref{sec: viterbi iterative} slightly modified, as suggested in Remark \ref{remark: alg}. The simulations parameters are as follows:
$n= 10^4$, $k=8$, and $\a = (3,2.9,\ldots,0.1)$. Each point corresponds to the average performance over $L=50$ independent source realizations. As mentioned in Section \ref{sec: viterbi iterative}, the iterative algorithm continues until there is no decrease in the cost. Fig.~\ref{fig:1-2} shows the average, minimum and maximum number of required iterations before convergence versus $\alpha$. Again, the number of trials are $L=50$. It can be observed that the number of iterations in this case is always below $60$, which, given the size of the search space, i.e, $2^n$, shows fast convergence. 

\begin{figure}
\begin{center}
\includegraphics[width=13cm]{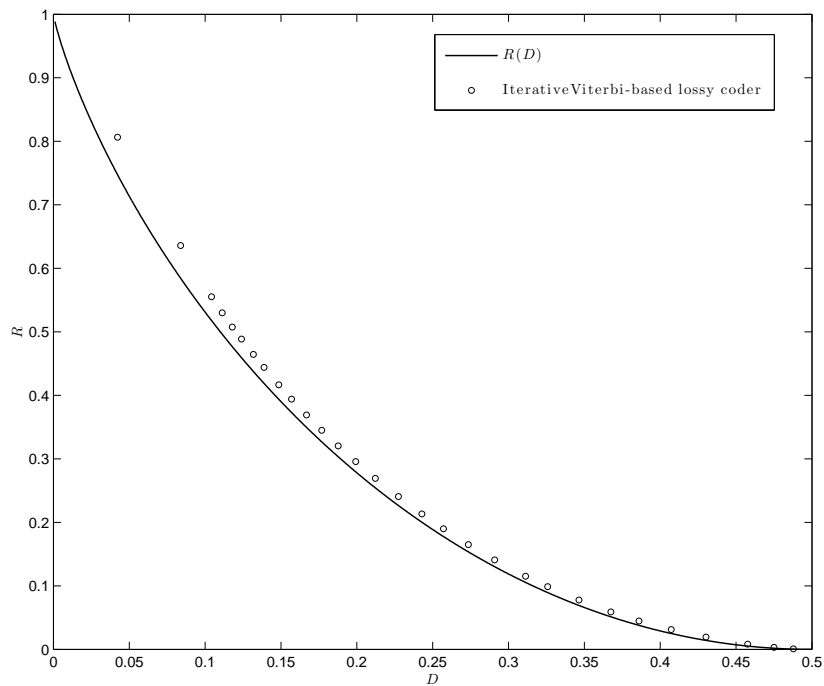}
\caption{Average performance of the iterative Viterbi-based lossy coder applied to an i.i.d.~$\Bern(0.5)$ source. ($n= 10^4$, $k=8$, $\a = (3,2.9,\ldots,0.1)$, and $L=50$)}\label{fig:1-1}
\end{center}
\end{figure}

\begin{figure}
\begin{center}
\includegraphics[width=13cm]{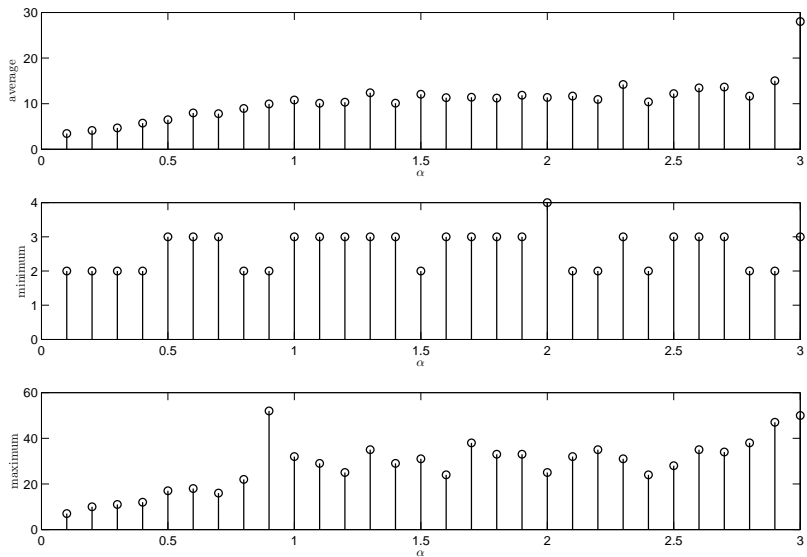}
\caption{From top to bottom: average, minimum and maximum number of iterations before convergence. (i.i.d.~$\Bern(0.5)$ source,  $n= 10^4$, $k=8$, $\a = (3,2.9,\ldots,0.1)$, and $L=50$)}\label{fig:1-2}
\end{center}
\end{figure}

The next example involves a binary symmetric Markov source (BSMS) with transition probability $q=0.2$. Fig.~\ref{fig:2-1} compares the average performance of the Viterbi encoder against upper and lower bounds on $R(D)$ \cite{ISIT07_JW}.   The reason for only comparing the performance of the algorithm against bounds on $R(D)$  in this case is that the rate-distortion function of a Markov source is not known, except for a low-distortion region. For low distortions, the Shannon lower bound is tight \cite{Gray_markov_source}. More explicitly, for $D\leq D_c\approx 0.0159$,
\[
R(D)= H_b(q)-H_b(D),
\]
where $H_b(\epsilon)\triangleq\mathcal{H}(\e,1-\e)$. For $D>D_c$, $R(D)>H_b(q)-H_b(D)$.

A comparison with the memoryless case (Fig.~\ref{fig:1-1}) seems to suggest that the problem is less with how quickly (in $n$) we are converging to the exhaustive search performance scheme of \eqref{eq: exhaustive_search} than with how quickly the convergence in \eqref{eq: achieving optimal point on the rd curve} is taking place, which is source dependent and not at our control.

Fig.~\ref{fig:2-2} shows the average number of iterations before convergence versus $\alpha$. It can be observed that the average is always below $15$.  To give some examples on how the energy is decreasing, Fig.~\ref{fig:2-4} and Fig.~\ref{fig:2-5} show the energy decay through iterations for $\a=1.6$ and $\a=1$ respectively.

\begin{figure}
\begin{center}
\includegraphics[height=11cm,width=15cm]{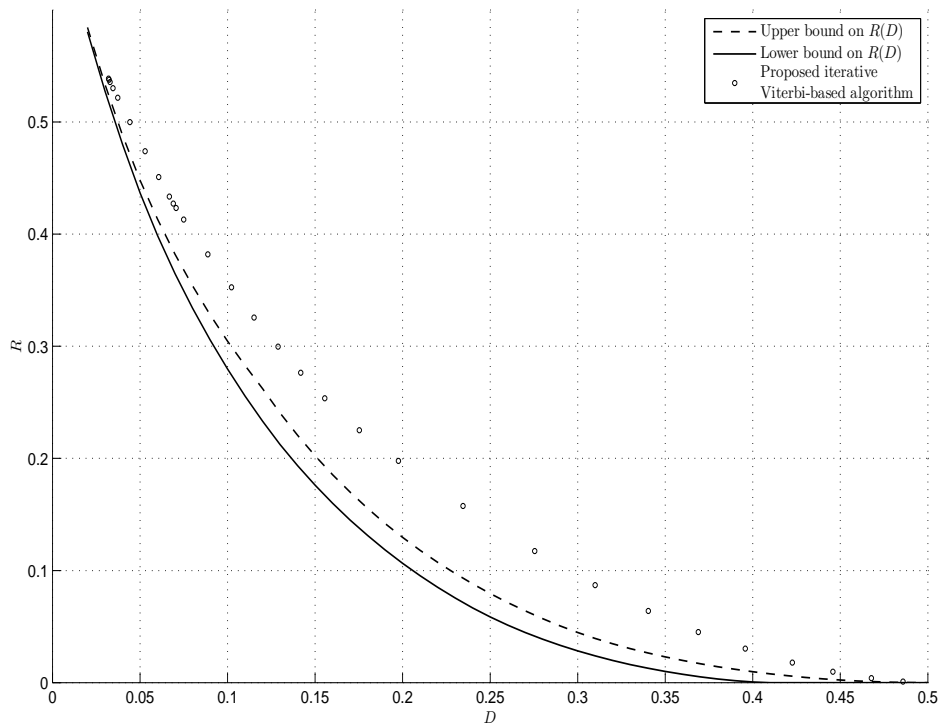}
\caption{Average performance of the iterative Viterbi-based lossy coder applied to a BSMS with $q=0.2$ source. ($n= 25\times10^3$, $k=8$, $\a = 3:-0.1:0.1$ and $L=50$)}\label{fig:2-1}
\end{center}
\end{figure}

\begin{figure}
\begin{center}
\includegraphics[height=11cm,width=15cm]{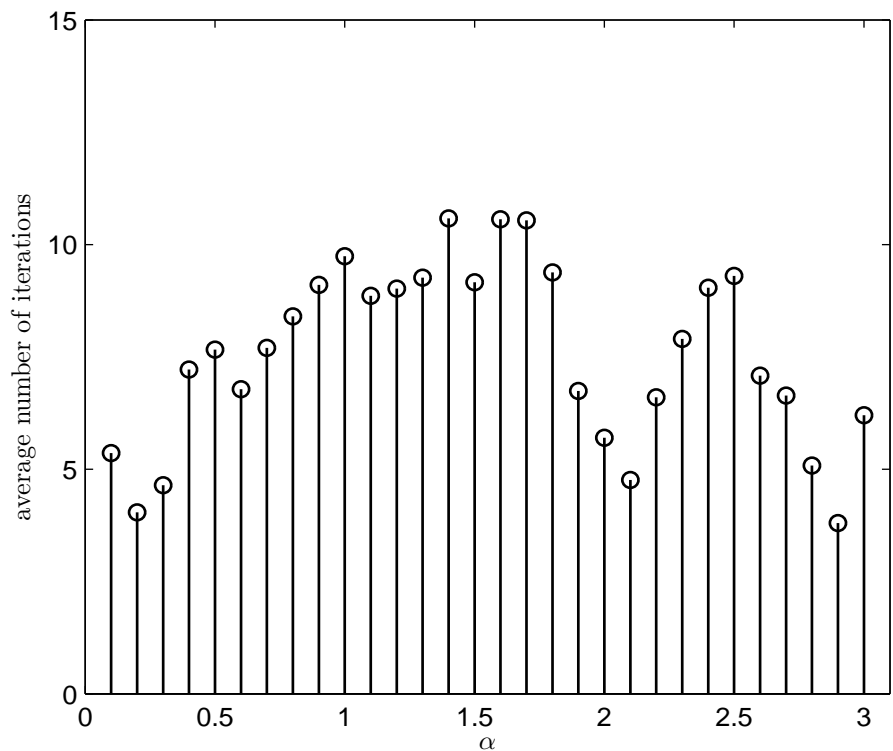}
\caption{Average number of iterations before convergence.(BSMS with $q=0.2$,
 $n= 25\times10^3$, $k=8$, $\a = 3:-0.1:0.1$ and $L=50$)}\label{fig:2-2}
\end{center}
\end{figure}


\begin{figure}
\begin{center}
\includegraphics[height=11cm,width=15cm]{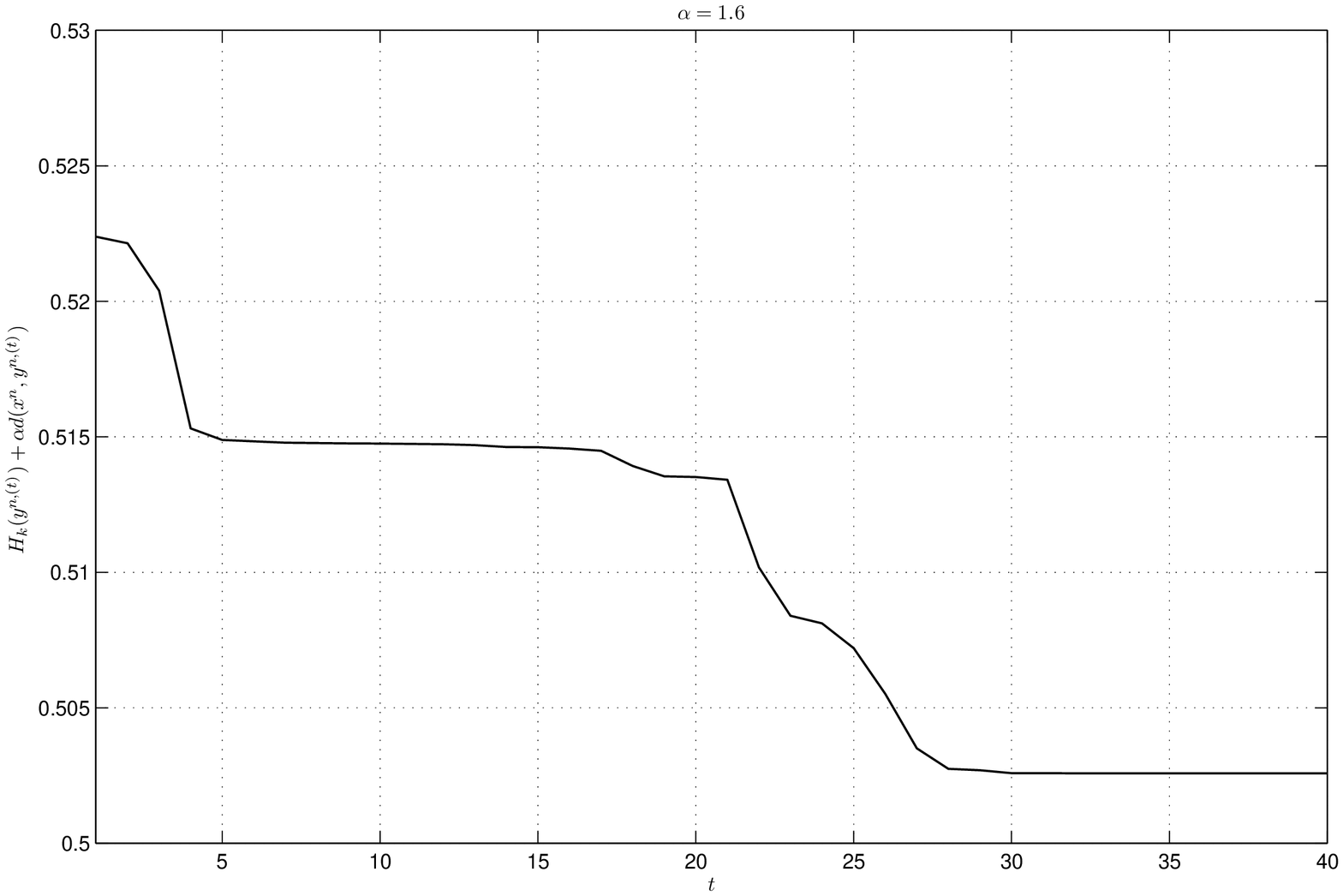}
\caption{Energy decay through the iterations for $\a=1.6$. (BSMS with $q=0.2$,
 $n= 25\times10^3$ and $k=8$)}\label{fig:2-4}
\end{center}
\end{figure}

\begin{figure}
\begin{center}
\includegraphics[height=11cm,width=15cm]{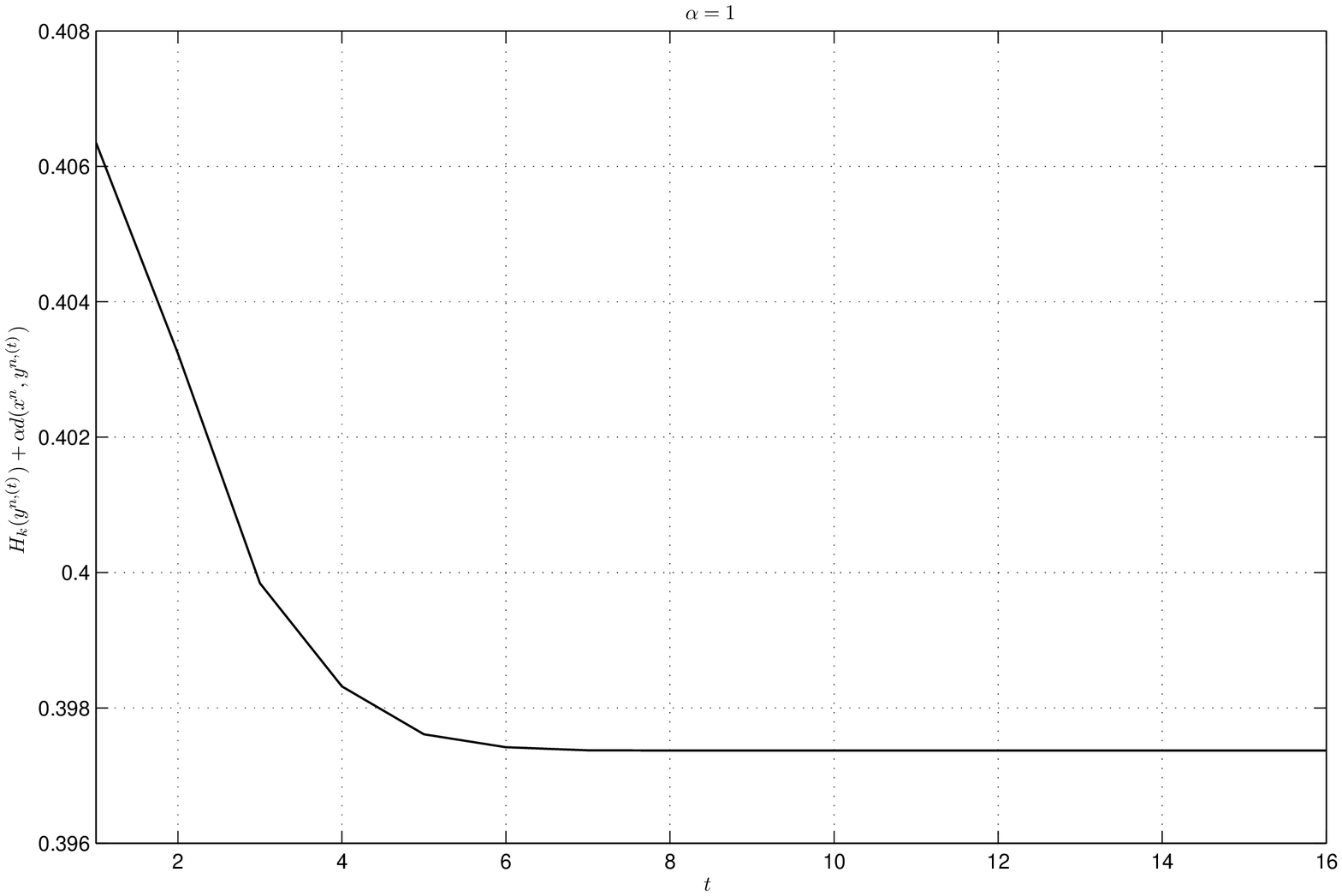}
\caption{Energy decay through the iterations for $\a=1$. (BSMS with $q=0.2$,
 $n= 25\times10^3$ and $k=8$)}\label{fig:2-5}
\end{center}
\end{figure}

\begin{remark}
Similar to \cite{JalaliW:09_arxiv}, here in the figures we are using $H_k(\hat{x}^n)$ as the rate, while in fact it is not a true length function. The reason is that as explained in \cite{JalaliW:09_arxiv}, by Ziv inequality \cite{Ziv_inequality}, if $k=o(\log(n))$, then for any $\e>0$, there exits $N_{\e}\in\mathds{N}$ such that for any $n>N_{\e}$ and any sequence $\mathds{y}=(y_1,y_2,\ldots)$,
\begin{equation} \label{eq: Ziv ineq remark}
 \left[ \frac{1}{n} \ell_{{\sf\footnotesize  LZ}} (y^n) -H_{k}(y^n) \right] \leq \e.
\end{equation}

\end{remark}

\section{Conclusions}\label{sec: conclusion}

In this paper, a new approach to for fixed-slope lossy compression of discrete sources is proposed. The core ingredient is the use of the  Viterbi algorithm, which is a dynamic programing algorithm. It enables the encoder to find the reconstruction sequence with minimum cost. The encoder first assigns some weights to different contexts of length $k$, i.e, subsequences of length $k+1$, that appear within the reconstruction sequence. Then, the overall cost assigned to each possible reconstruction sequence is the sum of the weights of different contexts multiplied by their number of appearances in the sequence, plus some constant times the distance between the original sequence and the candidate reconstruction sequence.  From this definition, it turns out that the state of the Viterbi algorithm at time $t$ is the last $k$ symbols observed plus the current symbol in the sequence, i.e, $(y_{t-k},\ldots,y_{t})$. Therefore, the Trellis has overall $|\hat{\Xc}|^{k+1}$ different states, corresponding to $|\hat{\Xc}|^{k+1}$ different possible contexts of length $k$.  Hence for coding a sequence of length $n$, the computational complexity of the Viterbi algorithm will be of the order of $O(n2^{k+1})$.  We prove that there exists a set of optimal coefficients for which the described algorithm will achieve the rate-distortion performance for any stationary ergodic process. The problem is finding those weights. We provide an optimization problem whose solution can be used to find an asymptotically tight approximation of the optimal coefficients resulting in an overall scheme which is universal with respect to the class of stationary ergodic sources. However, solving this optimization problem is computationally demanding, and in fact infeasible in practice  for even moderate blocklengths. In order to overcome this problem, we propose an iterative approach for approximating the optimal coefficients. This approach is partially justified by a guarantee of convergance to at least a local minimum.

In the described iterative approach, the algorithm starts at a large slope (corresponding to a small distortion) and gradually decreases the slope until it hits the desired value. At each slope, the algorithm runs the Viterbi algorithm iteratively until it converges. An interesting possible next step is to explore whether there exisits a sequence of slopes converging to the desired value in a small number of steps (e.g. of $o(n)$) for which we can guarantee convergence of the algorithm to the global minimum at the end of the porcess. Existance of such sequence of slopes implies a universal lossy compression algorithm with moderate computatioal complexity. 

\renewcommand{\theequation}{A-\arabic{equation}}
\setcounter{equation}{0}  

\section*{APPENDIX A: Proof of Theorem \ref{thm: main Viterbi coeff}}  \label{app1} 

\begin{proof}
By rearranging the terms, the cost that is to be minimized in (P1) can alternatively be represented as follows
\begin{align}
H_k(y^n)+\a d_n(x^n,y^n) &= H_k(\mb(y^n))+\a \frac{1}{n}\sum_{i=1}^{n} d(x_i,y_i),\nonumber\\
                         &= H_k(\mb(y^n))+\a \frac{1}{n}\sum_{i=1}^n d(x_i,y_i)\sum\limits_{a\in\Xc,b\in\hat{\Xc}}\ind_{(x_i,y_i)=(a,b)}\nonumber\\
                         &= H_k(\mb(y^n))+\a \frac{1}{n}\sum_{i=1}^n\sum\limits_{a\in\Xc,b\in\hat{\Xc}}d(a,b)\ind_{(x_i,y_i)=(a,b)}\nonumber\\
                         &= H_k(\mb(y^n))+\a \sum\limits_{a\in\Xc,b\in\hat{\Xc}}d(a,b)\frac{1}{n}\sum_{i=1}^n\ind_{(x_i,y_i)=(a,b)}\nonumber\\
                         &= H_k(\mb(y^n))+\a \sum\limits_{a\in\Xc,b\in\hat{\Xc}}d(a,b)\hat{p}^{(1)}_{[x^n,y^n]}(a,b)\nonumber\\
                         &= H_{\hat{p}^{(k+1)}_{[y^n]}}(Y_{k+1}|Y^k)+ \a\E_{\hat{p}^{(1)}_{[x^n,y^n]}}d(X_1,Y_1).
                         \label{eq: new rep of cost}
\end{align}

This new representation reveals the close connection between (P1) and \eqref{eq: original}. Although the costs we are trying to minimize in the two problems are equal, there is a fundamental difference between them: (P1) is a discrete optimization problem, while the optimization space in \eqref{eq: original} is continuous.

Let $\Ec_n^*$ and $\Pc_n^*$ be the sets of minimizers of (P1), and joint empirical distributions of order $\ell$, $\hat{p}_{[x^n,y^n]}^{(\ell)}$, induced by them respectively. Also let $\Sc_n^*$ be the set of marginalized distributions of order $k+1$ in $\Pc_n^*$ with respect to $Y$. Finally, let $C^*_n$ and $\hat{C}^*_n$ be the minimum values achieved by (P1) and \eqref{eq: optimization} respectively.

In order to make the proof more tractable, we break it down into several steps as follows.

\begin{enumerate}

\item Let $y^n\in\Ec_n^*$, and $\hat{p}_{[x^n,y^n]}^{(\ell)}$ be the induced joint empirical distribution. It is easy to check that $\hat{p}_{[x^n,y^n]}^{(\ell)}$ satisfies all  the constraints mentioned in \eqref{eq: optimization}. The only condition that might need some thought is the stationarity constraint, which also holds because
    \begin{align}
        \sum\limits_{a_{\ell}\in\Xc,b_{\ell}\in\hat{\Xc}} \hat{p}_{[x^n,y^n]}^{(\ell)}(a^{\ell},b^{\ell})
        &= \frac{1}{n}\left|\left\{ 1\leq i\leq n: x_{i-\ell+1}^{i-1}=a^{\ell-1}, y_{i-\ell+1}^{i-1}= b^{\ell-1}\right\}\right|,\nonumber\\
        &= \sum\limits_{a_{\ell}\in\Xc,b_{\ell}\in\hat{\Xc}} \hat{p}_{[x^n,y^n]}^{(\ell)}(a_{\ell}a^{{\ell}-1},b_kb^{{\ell}-1}).
    \end{align}
    Therefore, since $\hat{C}_n^*$ is the minimum of \eqref{eq: optimization}, we have
    \begin{align}
    \hat{C}^*_n & \leq H_k(\mb(y^n))+\a \E_{\hat{p}_{[x^n,y^n]}^{(1)}}(X_{k+1},Y_{k+1}) \nonumber\\
    & = H_k(\mb(y^n))+\a d_n(x^n,y^n)\nonumber\\
    & = C_n^*.\label{eq: C_n_s less than C_n}
    \end{align}

\item Let $p^{*(\ell)}\in \hat{\Pc}_n^*$. Based on this joint probability distribution and $x^n$, we construct a reconstruction sequence $\tilde{X}^n$ as follows: divide $x^n$ into $r = \lceil \frac{n}{\ell}\rceil$ consecutive blocks:
    \[
    x^{\ell},x_{\ell+1}^{2\ell},\ldots,x_{(r-2)\ell+1}^{(r-1)\ell},x_{(r-1)\ell+1}^{n},
    \]
    where except for possibly the last block, the other blocks have length $\ell$. The new sequence is constructed as follows
    \[
    \tilde{X}^{\ell},\tilde{X}_{\ell+1}^{2\ell},\ldots,\tilde{X}_{(r-2)\ell+1}^{(r-1)\ell},\tilde{X}_{(r-1)\ell+1}^{n},
    \]
    where for $i=1,\ldots,r-1$, $\tilde{X}_{(i-1)\ell+1}^{i\ell}$ is a sample from the conditional distribution $p^{*(\ell)}(\hat{X}^{\ell}|X^{\ell}=x_{(i-1)\ell+1}^{i\ell})$, and $\tilde{X}_{(r-1)\ell+1}^{n}\sim p^{*(\ell)}(\hat{X}_{(r-1)\ell+1}^{n}|X_{(r-1)\ell+1}^{n}=x_{(r-1)\ell+1}^{n}) $.

\item
    Assume that $\mathbf{x}=\{x_i\}_{i=1}^{\infty}$ is a given individual sequence. For each $n$, let $p^{*(k+1)}$ be the $(k+1)^{\rm th}$ order marginalized version of the solution of \eqref{eq: optimization} on $\hat{\Xc}^{(k+1)}$. Moreover, let $\tilde{X}^n$ be the constructed as described in the previous item, and $\hat{p}_{[\tilde{X}^n]}^{(k+1)}$ be the $(k+1)^{\rm th}$ order empirical distribution induced by $\tilde{X}^n$. We now prove that
    \begin{align}
    \|p^{*(k+1)}-\hat{p}_{[\tilde{X}^n]}^{(k+1)}\|_1\to 0, \;\;{\rm a.s.},\label{eq: as_convergence_p}
    \end{align}
    where the randomization in \eqref{eq: as_convergence_p} is only in the generation of $\tilde{X}^n$.\\
\begin{remark} Since $p^{*(\ell)}$ satisfies \emph{stationarity condition}, its $(k+1)^{\rm th}$ order marginalized distribution, $p^{*(k+1)}$, is well-defined and can be computed with respect to any of the $(k+1)$ consecutive positions in $1,\ldots,\ell$. In other words for $a^{k+1}\in\hat{\Xc}^{k+1}$,
    \begin{align}
    p^{*(k+1)}(a^{k+1}) = \sum\limits_{b^{\ell-k-1}\in\hat{\Xc}^n}p^{*(k+1)}(b^{j}a^{k+1}b_{j+1}^{\ell-k-1}),
    \end{align}
    for any $j\in\{0,\ldots,\ell-k-1\}$, and the result does not depend on the choice of $j$.
\end{remark}

    In order to show that the difference between $\hat{p}_{[\tilde{X}^n]}^{(k+1)}(a^{k+1})$ and $p^{*(k+1)}(a^{k+1})$ is going to zero almost surely, we decompose $\hat{p}_{[\tilde{X}^n]}^{(k+1)}(a^{k+1})$ into the average of $\ell-k$ terms each of which is converging to $p^{*(k+1)}(a^{k+1})$. Then using the union bound we get the desired result which is the convergence of $\hat{p}_{[\tilde{X}^n]}^{(k+1)}(a^{k+1})$ to $p^{*(k+1)}(a^{k+1})$. For $a^{k+1}\in\hat{\Xc}^{k+1}$,
    \begin{align}
    &\left|\hat{p}_{[\tilde{X}^n]}^{(k+1)}(a^{k+1}) - p^{*(k+1)}(a^{k+1})\right| \nonumber\\ &=\left|\frac{1}{n}\sum\limits_{i=1}^{n}\ind_{\tilde{X}_{i-k}^{i}=a^{k+1}} - p^{*(k+1)}(a^{k+1})\right|  \nonumber\\
    &=  \left|\frac{1}{n}\sum\limits_{j=0}^{\ell-k-1}\sum\limits_{i=1}^{r-1}\ind_{\tilde{X}_{i\ell-j-k}^{i\ell-j}=a^{k+1}}+\d_1- p^{*(k+1)}(a^{k+1})\right|, \nonumber\\
    &=  \left|\frac{r}{n}\sum\limits_{j=0}^{\ell-k-1}\left[\frac{1}{r}\sum\limits_{i=1}^{r-1}\ind_{\tilde{X}_{i\ell-j-k}^{i\ell-j}=a^{k+1}}\right]+\d_1- p^{*(k+1)}(a^{k+1})\right|, \nonumber\\
    &=  \left|\frac{1}{\ell-k}\sum\limits_{j=0}^{\ell-k-1}\left[\frac{1}{r}\sum\limits_{i=1}^{r-1}\ind_{\tilde{X}_{i\ell-j-k}^{i\ell-j}=a^{k+1}}\right]+\d_2- p^{*(k+1)}(a^{k+1})\right|, \label{eq: p_hat_k}
    \end{align}
    where $\d_1$ accounts for the edge effects between the blocks, and $\d_2$ is defined such that $\d_2-\d_1$ takes care of the effect of replacing $\frac{r}{n}$ with $\frac{1}{\ell-k}$. Therefore, $0\leq \d_1< {(k+1)r \over n} + {\ell-1 \over n}\leq  {2(k+1) \over \ell} + {1 \over r}$, and $ |\d_2-\d_1| = o(k/{\ell})$. Hence, $\d_1\to 0$ and $\d_2\to 0$ as $n\to\infty$. 

The new representation decomposes a sequence of correlated random variables, $\{\ind_{\tilde{X}_{i-k}^i=a^{k+1}}\}_{i=k+1}^n$, into $\ell-k$ sub-sequences where each of them is an independent process. For achieving this some counts that lie between two blocks are ignored, i.e., if $\ind_{\tilde{X}_{i-k}^i=a^{k+1}}$ is such that it depends on more than one block of the form $\tilde{X}_{(i-1)\ell+1}^{i\ell}$, we ignore it. The effect of such ignored counts will be no more than $\d_r$ which goes to zero as $k,\ell\to\infty$ because the theorem requires $k=o(\ell)$. More specifically in \eqref{eq: p_hat_k}, for each $j\in\{0,\ldots,\ell-k-1\}$, $\{\ind_{\tilde{X}_{i\ell-j-k}^{i\ell-j}=a^{k+1}}\}_{i=1}^r$ is a sequence of independent  not necessarily identically distributed random variables.

    For $n$ large enough, $|\d_2|<\e/2$. Therefore, by Hoeffding inequality \cite{Hoeffding}, and the union bound,
    \begin{align}
    &\P\left(\left|\hat{p}_{[\tilde{X}^n]}^{(k+1)}(a^{k+1}) - p^{*(k+1)}(a^{k+1})\right|>\e \right),\nonumber\\ &\leq\P\left(\left|\frac{1}{\ell-k}\sum\limits_{j=0}^{\ell-k-1}\left[\frac{1}{r}\sum\limits_{i=1}^{r-1}\ind_{\tilde{X}_{i\ell-j-k}^{i\ell-j}=a^{k+1}}- p^{*(k+1)}(a^{k+1})\right]\right|>\frac{\e}{2}\right),\nonumber\\
    &\leq\P\left(\frac{1}{\ell-k}\sum\limits_{j=0}^{\ell-k-1}\left|\frac{1}{r}\sum\limits_{i=1}^{r-1}\ind_{\tilde{X}_{i\ell-j-k}^{i\ell-j}=a^{k+1}}- p^{*(k+1)}(a^{k+1})\right|>\frac{\e}{2}\right),\nonumber\\
    &\leq\sum\limits_{j=0}^{\ell-k-1}\P\left(\left|\frac{1}{r}\sum\limits_{i=1}^{r-1}\ind_{\tilde{X}_{i\ell-j-k}^{i\ell-j}=a^{k+1}}- p^{*(k+1)}(a^{k+1})\right|>\frac{\e}{2}\right),\nonumber\\
    & \leq 2(\ell-k)e^{-r\e^2/2}.\label{eq: delta p_hat and p_star}
    \end{align}

Again by the union bound,
    \begin{align}
    &\P\left(\|\hat{p}^{(k+1)}_{[\tilde{X}^n]}-p^{*(k+1)}\|_1> \e \right)\nonumber\\
    &\hspace{1cm}\leq \sum\limits_{a^{k+1}\in\hat{\Xc}^{k+1}}\P\left(\left|\hat{p}_{[\tilde{X}^n]}^{(k+1)}(a^{k+1}) - p^{*(k+1)}(a^{k+1})\right|>\frac{\e}{|\hat{\Xc}|^{k+1}} \right),\nonumber\\
    &\hspace{1cm}\leq |\hat{\Xc}|^{k+1} 2(\ell-k)e^{-\frac{n\e^2}{2\ell|\hat{\Xc}|^{2(k+1)}}}. \label{eq: bound_tv_emp}
    \end{align}
    Our choices of $k = k_n = o(\log n)$, $\ell = \ell_n = o(n^{1/4})$, $k=o(\ell)$, and $k_n, \ell_n\to\infty$, as $n\to\infty$ now guarantee that the right hand side of \eqref{eq: bound_tv_emp} is summable on $n$ which together with Borel-Cantelli Lemma yields the desired result of \eqref{eq: as_convergence_p}.

    \item Using similar steps as above we can prove that
    \begin{align}
    \|q^*-\hat{q}^{(1)}_{[x^n,\tilde{X}^n]}\|\to 0,\;\;{\rm a.s.}
    \end{align}
    Again we first prove that $|q^*(a,b)-\hat{q}^{(1)}_{[x^n,\tilde{X}^n]}(a,b)|\to 0$ for each $a\in\Xc$ and $b\in \hat{\Xc}$. For doing this we again need to decompose
    \[
    \{\ind_{x_i=a,\tilde{X}_i=b}\}_{i=1}^n
    \]
    into $\ell$ sub-sequences each of which is a sequence of independent random variables, and then apply Hoeffding inequality plus the union bound. Finally we apply the union bound again in addition to the Borel-Cantelli Lemma to get our desired result.

    \item Combing the results of the last two parts, and the fact that $H_k(\mb)$ and $\E_qd(X,Y)$ are bounded continuous functions of $\mb$ and $q$ respectively, we conclude that
    \begin{align}
    H_k(\tilde{X}^n)+\a d_n(x^n,\tilde{X}^n) &= H_{\hat{p}^{(k+1)}_{[\tilde{X}^n]}}(Y_{k+1}|Y^k)+ \a\E_{\hat{q}^{(1)}_{[x^n,\tilde{X}^n]}}d(X_1,Y_1)\nonumber\\
    &= H_{{p}^{*(k+1)}}(Y_{k+1}|Y^k)+ \a\E_{q^{*}}d(X_1,Y_1)+\e_n\nonumber\\
    &=\hat{C}^*_n + \e_n, \label{eq: approx_cost_by_X_t}
    \end{align}
    where $\e_n\to 0$ with probability $1$.

    \item Since $C_n^*$ is the minimum of (P1), we have
    \begin{align}
    C_n^* &\leq H_k(\tilde{X}^n)+\a d_n(x^n,\tilde{X}^n),\nonumber\\
    &= \hat{C}^*_n+\e_n.
    \end{align}
     On the other hand, as shown in \eqref{eq: C_n_s less than C_n}, $\hat{C}^*_n\leq C_n^*$. Therefore,
    \begin{align}
    |C_n^*-\hat{C}^*_n|\to 0 \label{eq: diff C_n C_hat_n}
    \end{align}
    as $n\to \infty$.

    \item For a given set of coefficients $\boldsymbol\l=\{\l_{\b,\bb}\}_{\b,\bb}$ computed at some $\mb$ according to \eqref{eq: def of lambda}, define
        \begin{align}
        f(\boldsymbol\l)=\min\limits_{y^n\in\hat{\Xc}^n}\left[\sum\limits_{\b,\bb}\l_{\b,\bb}m_{\b,\bb}(y^n)+\a d_n(x^n,y^n)\right].
        \end{align}
        It is easy to check that $f$ is continuous, and bounded by $1+\a$. Therefore, since $\boldsymbol\l$ is in turns a continuous function of $\mb$, and as proved in \eqref{eq: as_convergence_p},
        \[
        \|p^{*(k+1)}-\hat{p}_{[\tilde{X}^n]}^{(k+1)}\|_1\to 0,
        \]
        we conclude that,
        \begin{equation}
        |f(\boldsymbol\l^*)-f(\hat{\boldsymbol\l})|\to 0,\label{eq: diff_f}
      \end{equation}
      where $\boldsymbol\l^*$ and $\hat{\boldsymbol\l}$  are the coefficients computed at $p^{*(k+1)}$ and $\hat{p}_{[\tilde{X}^n]}^{(k+1)}$ respectively.

    \item Let $\bar{X}^n$ be the output of (P2) when the coefficients are computed at $\mb(\tilde{X}^n)$. Then, from Theorem \ref{thm: energy decreases},
      \begin{align}
        H_{k}(\bar{X}^n) +\a  d_n (x^n, \bar{X}^n) &\leq H_{k}(\tilde{X}^n) +\a  d_n (x^n, \tilde{X}^n)\nonumber\\
                                                   &= \hat{C}_n^*+{\e}_n.
      \end{align}
      Since, $\e_n\to 0$, this shows that haven computed the coefficients at $\mb(\tilde{X}^n)$, we would get a universal lossy compressor. But instead, we want to compute the coefficients at $\mb^*$. From \eqref{eq: diff_f}, the difference between the performances of these two algorithms goes to zero. Therefore, we finally get our desired result which is
      \begin{equation}
         \left[ H_{k}(\hat{X}^n) +\a  d_n (X^n, \hat{X}^n ) \right] \stackrel{n \rightarrow \infty}{\longrightarrow}  \min_{D \geq 0} \left[ R(\mathbf{X}, D)
        +\a D \right],\;\;{\rm a.s.}
      \end{equation}

\end{enumerate}
\end{proof}

\bibliographystyle{unsrt}
\bibliography{myrefs}

\end{document}